\def\R{{\mathbb R}}  
\def\C{{\mathbb C}}
\def\S{{\mathcal{S}}}
\def\M{{\mathcal{M}}}
\def\B{{\mathcal{B}}}
\def\P{{\mathcal{P}}}
\def\d{{\delta}}
\def\l{{\lambda}}
\def\m{{\mu}}
\def\r{{\rho}}
\def\s{{\sigma}}
\def\th{{\theta}}
\def\x{{\xi}}
\def\y{{\eta}}
\def\Th{{\Theta}}
\def\o{{\omega}}
\def\O{{\Omega}}
\def\Tr{{\rm Tr}\,}
\newcommand{\argmax}{\mathop{\rm arg~max}\limits}
\newcommand{\argmin}{\mathop{\rm arg~min}\limits}
\newtheorem{theorem}    {Theorem}
\newtheorem{lemma}      [theorem]{Lemma}
\newtheorem{corollary}  [theorem]{Corollary}
\begin{document}
\title{Data processing for qubit state tomography: \\ An information geometric approach}

\author{Akio Fujiwara}
\email{fujiwara@math.sci.osaka-u.ac.jp}
\affiliation{%
Department of Mathematics, Osaka University,
Toyonaka, Osaka 560-0043, Japan
}%

\author{Koichi Yamagata}
\affiliation{%
Department of Information and System Engineering,
Chuo University, Bunkyo-ku, Tokyo 112-8551, Japan
}%

\date{\today}



\begin{abstract}
A statistically feasible data post-processing method for the conventional qubit state tomography is studied from an information geometrical point of view. 
It is shown that the space $(-1,1)^3$ of the Stokes parameters $(\x_1, \x_2,\x_3)$ that specify qubit states should be regarded as a Riemannian manifold endowed with a metric $g_{ij}:=\d_{ij}/(1-(\x_i)^2)$, 
and that the data processing based on the maximum likelihood method is realized by the orthogonal projection from the empirical distribution onto the Bloch sphere with respect to the metric $g_{ij}$. 
An efficient algorithm for computing the maximum likelihood estimate is also proposed. 
\end{abstract}

\pacs{03.65.Wj, 03.67.-a, 02.40.-k, 42.50.Dv}
\maketitle

\section{Introduction}

It is well known that there is a one-to-one affine correspondence between 
the quantum state space 
\[ \S(\C^2):=\{\r\; | \; \r \ge 0,\;\Tr\r=1\} \]
on the two-dimensional Hilbert space $\C^2$ 
and the unit ball 
\[ B:=\left.\left\{\x=(\x_1, \x_2, \x_3)\in\R^3 \; \right| \; \|\x\|^2:=(\x_1)^2+(\x_2)^2+(\x_3)^2 \le1 \right\} \]
in the Euclidean space $\R^3$.  
In fact, the correspondence is explicitly given by the Stokes parametrization: 
\[
 \x \longmapsto
\r_\x
 =\frac{1}{2}(I+\x_1 \s_1+\x_2 \s_2 +\x_3 \s_3),
\]
where $\s_1$, $\s_2$, and $\s_3$ are the standard Pauli matrices. 
The unit ball $B$ in the Stokes parameter space is sometimes referred to as the Bloch ball. 
Because of the relations
\[ E_\x[\s_i]:=\Tr \r_\x \s_i=\x_i,\qquad (i\in\{1,2,3\}),  \]
the set $\s=(\s_1, \s_2, \s_3)$ of observables is regarded as an unbiased estimator \cite{{LehmanCasella}, {Helstrom:1976}, {Holevo:1982}} for the parameter $\x=(\x_1, \x_2, \x_3)$. 
This is the basic idea behind the conventional qubit state tomography. 

Suppose that, among $3N$ independent experiments, the $i$th Pauli matrix $\s_i$ was measured $N$ times and obtained outcomes $+1$ (spin-up) and $-1$ (spin-down), each $n_{i}^{+}$ and $n_{i}^{-}$ times.
Then a natural estimate for the true value of the parameter $\x=(\x_1,\x_2,\x_3)$ is
\begin{equation}\label{eqn:checkXi}
 \hat\x=(\hat\x_1, \hat\x_2,\hat\x_3)
 :=\left(\frac{n_{1}^{+}-n_{1}^{-}}{N},\;\frac{n_{2}^{+}-n_{2}^{-}}{N},\;\frac{n_{3}^{+}-n_{3}^{-}}{N}\right).
\end{equation}
In reality, there is a possibility that $\hat\x$ falls outside the Bloch ball $B$, because $\hat\x$ can take any value on the Stokes parameter space $[-1,1]^3$. 
In such cases, the temporal estimate $\hat\x$ must be corrected so that the new estimate falls within the Bloch ball $B$.
One may be tempted to adopt, as an alternative to $\hat\x$, the ``closest'' point on the Bloch sphere 
$S:=\{\x\in\R^3\,|\, \|\x\|^2=1\}$ from $\hat\x$ as measured by the Euclidean distance, i.e., the intersection of the unit sphere $S$ and the segment connecting $\hat\x$ and the origin of $\R^3$. 
Obviously, such an idea is based on Euclidean geometry, regarding the Bloch ball $B$ as a submanifold of the space $[-1,1]^3\, (\subset\R^3)$ endowed with Euclidean structure.
However, there is no a priori reason for regarding the domain $B$ of the Stokes parameters as a submanifold of  Euclidean space $\R^3$.

The purpose of the present paper is to clarify that such an idea for data post-processing based on Euclidean geometry is not justified from a statistical point of view, and to propose an alternative, efficient method of correcting the temporal estimate $\hat\x$ that has fallen outside the Bloch ball $B$ based on the maximum likelihood method \cite{{LehmanCasella},{Hradil:1997},{BanaszekDPS:1999},{HradilSBR:2000},{JamesKMW:2001},{deBurghLDG:2008},{BlumeKohout:2010}}. 
In what follows, we restrict ourselves to the interior $(-1,1)^3$ of the Stokes parameter space $[-1,1]^3$ to avoid statistical singularities. 
The main result of the present paper is the following

\begin{theorem}\label{thm:main1}
In the conventional quantum state tomography, 
the Bloch ball $B$ should be regarded as a submanifold of a Riemannian manifold $(-1,1)^3$ endowed with a metric $g$ whose components at $\x\in (-1,1)^3$ are given, up to scaling, by
\begin{equation}\label{eqn:metric1}
 g_\x\left(\frac{\partial}{\partial\x_i},\;\frac{\partial}{\partial\x_j}\right)
 =\frac{\d_{ij}}{1-(\x_i)^2},\qquad (i,j\in\{1,2,3\}).
\end{equation}
If the temporal estimate $\hat\x=(\hat\x_1, \hat\x_2, \hat\x_3)\in (-1,1)^3$ has fallen outside the Bloch ball $B$,  the corrected estimate $\x^*=(\x^*_1, \x^*_2, \x^*_3)$ based on the maximum likelihood method is the orthogonal projection from $\hat\x$ onto the Bloch sphere $S$ with respect to the metric (\ref{eqn:metric1}), and is given by the unique solution of the simultaneous equations
\[
 \x^*_i \left(1-(\x^*_i)^2\right)=\l\,(\hat\x_i-\x^*_i),\qquad (i\in\{1,2,3\})
\]
and
\[  (\x^*_1)^2+(\x^*_2)^2+(\x^*_3)^2=1, \]
where $\l$ is an auxiliary positive parameter. 
\end{theorem}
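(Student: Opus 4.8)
The plan is to reduce the corrected-estimate problem to a constrained likelihood maximization on the Bloch sphere $S$, solve it with a Lagrange multiplier, and then re-read the stationarity equations both as the defining property of the maximum likelihood estimate and as a $g$-orthogonality condition. Since the state $\r_\x$ assigns probabilities $(1\pm\x_i)/2$ to the outcomes $\pm1$ of $\s_i$, the log-likelihood of the data described above is
\[
 \ell(\x)=\sum_{i=1}^{3}\Bigl[\,n_i^{+}\log\tfrac{1+\x_i}{2}+n_i^{-}\log\tfrac{1-\x_i}{2}\,\Bigr].
\]
Under the standing assumption $\hat\x\in(-1,1)^3$ (equivalently $0<n_i^{\pm}<N$ for every $i$), each summand is strictly concave in $\x_i$ and tends to $-\infty$ as $\x_i\to\pm1$, so $\ell$ is strictly concave on $(-1,1)^3$ with a unique critical point, which is therefore its unconstrained maximizer; solving $\partial\ell/\partial\x_i=0$ shows this maximizer is the empirical estimate $\hat\x$ of (\ref{eqn:checkXi}). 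Consequently, when $\hat\x\notin B$, the maximizer of $\ell$ over the convex set $\overline B\cap(-1,1)^3$ exists and is unique, and it cannot be an interior point of $B$ (that would make it an unconstrained local, hence global, maximizer, forcing it to equal $\hat\x\notin B$); so it lies on $S$ and is, by definition, the corrected estimate $\x^*$.

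I would then differentiate --- a short computation gives $\partial\ell/\partial\x_i=N(\hat\x_i-\x_i)/(1-(\x_i)^2)$ --- and write the Karush--Kuhn--Tucker conditions for maximizing $\ell$ subject to $\|\x\|^2\le1$: $\nabla\ell(\x^*)=2\m\,\x^*$ with a multiplier $\m\ge0$, together with $\|\x^*\|^2=1$ by complementary slackness. Here $\m=0$ is impossible, as it would give $\nabla\ell(\x^*)=0$, i.e. $\x^*=\hat\x\notin B$; hence $\m>0$, and putting $\l:=N/(2\m)>0$ the stationarity conditions become exactly $\x^*_i(1-(\x^*_i)^2)=\l(\hat\x_i-\x^*_i)$ for $i\in\{1,2,3\}$. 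For the converse and for uniqueness I would invoke the sufficiency of the KKT conditions when the objective is concave and the feasible set convex: any $\x\in S$ solving the displayed system with $\l>0$ obeys $\ell(\y)-\ell(\x)\le\nabla\ell(\x)\cdot(\y-\x)=2\m(\x\cdot\y-1)\le0$ for all $\y$ with $\|\y\|\le1$, hence is the global maximizer, which is unique by strict concavity. This simultaneously identifies $\x^*$ with the MLE and shows the nonlinear system has exactly one solution.

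It remains to supply the geometric interpretation. Computing the Fisher information matrix of the model $\{p(\cdot|\x)\}$, a product of three independent two-valued experiments with success probabilities $(1+\x_i)/2$, one gets diagonal entries $\tfrac12\bigl(\tfrac{1}{1+\x_i}+\tfrac{1}{1-\x_i}\bigr)=1/(1-(\x_i)^2)$ and vanishing off-diagonal entries; thus (\ref{eqn:metric1}) is, up to scale, the Fisher metric --- the statistically invariant Riemannian structure on the Stokes parameter space, which is precisely why this metric, not the Euclidean one, is the natural choice. Since $S$ is the Euclidean unit sphere, $T_{\x^*}S=\{v\in\R^3:\sum_i v_i\x^*_i=0\}$, and a vector $w$ lies in the $g$-orthogonal complement of $T_{\x^*}S$ exactly when $(g_{ii}w_i)_i$ is proportional to $(\x^*_i)_i$, i.e. when $w_i\propto\x^*_i(1-(\x^*_i)^2)$. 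Taking $w=\hat\x-\x^*$, this proportionality (with constant $1/\l$) is the system above; hence $\x^*$ is the point of $S$ for which $\hat\x-\x^*$ is $g$-orthogonal to $T_{\x^*}S$, i.e. the $g$-orthogonal projection of $\hat\x$ onto $S$. (Information-geometrically: $\x$ is a mixture/expectation coordinate, so $\hat\x-\x^*$ is the displacement along the mixture geodesic, and the MLE is the $m$-projection, which by the generalized Pythagorean theorem meets $S$ $g$-orthogonally.)

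The Lagrange computation is routine; the two points I expect to require care are establishing that the constrained maximizer lands on $S$ and not in the interior of $B$, and --- more substantively --- using concavity together with the positivity of $\l$ to upgrade the first-order characterization to genuine uniqueness of the solution of the nonlinear system, rather than a description of one critical point among possibly several.
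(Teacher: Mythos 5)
Your proof is correct, but it takes a genuinely different route from the paper's. The paper never optimizes the likelihood directly: it characterizes the MLE as the minimizer of $D(\hat q_{3N}\|p)$ over $\B\subset\P(\O^3)$, proves that the canonical divergence of the induced dualistic structure on the submanifold $\P(\O)^{\otimes 3}$ of product distributions coincides with the ambient Kullback--Leibler divergence (Lemma~\ref{lem:divergenceEquality}), transports that structure to $(-1,1)^3$ via an isostatistic isomorphism (Lemma~\ref{lem:isostatistic}) --- which is where the computation yielding $\d_{ij}/(1-(\x_i)^2)$ appears --- and only then reads off the stationarity system from the requirement that the straight line (the $\tilde\nabla^{(m)}$-geodesic) from $\hat\x$ to $\x^*$ meet $S$ orthogonally in the metric $\tilde g$. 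You instead treat the problem as a finite-dimensional convex program: strict concavity of the log-likelihood $\ell$ on $(-1,1)^3$, identification of $\hat\x$ as its unique unconstrained critical point, and the KKT conditions for the constraint $\|\x\|^2\le 1$ produce the same system with $\l=N/(2\m)>0$, after which the $g$-orthogonality interpretation is recovered a posteriori from the observation that the system says exactly that $\hat\x-\x^*$ is $\tilde g$-orthogonal to $T_{\x^*}S$. Your route is more elementary and self-contained, and it delivers something the paper asserts but does not prove at this point, namely the \emph{uniqueness} of the solution of the nonlinear system subject to $\l>0$ and $\|\x^*\|=1$: sufficiency of the KKT conditions for a strictly concave objective on a convex set settles this cleanly, whereas the paper effectively defers it to the explicit cubic-root formula of Section~\ref{sec:4}. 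What the paper's route buys is the conceptual justification of the theorem's first sentence --- why the Fisher metric of the product model is the canonical structure on the Stokes space, via the identification of $(-1,1)^3$ with $\P(\O)^{\otimes 3}$ and the fact that the Kullback--Leibler projection survives restriction to a submanifold that is not $\nabla^{(m)}$-autoparallel --- together with machinery (dual coordinates, orthogonal foliations) that generalizes directly to Theorem~\ref{thm:main2}; your appeal to the Fisher information matrix as the ``statistically invariant'' choice covers the same normative claim more briefly. Both arguments are sound.
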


It is also possible to generalize Theorem \ref{thm:main1} to treat the case when the numbers of measurements in the directions $\s_i$ are not equal. 
Suppose that, among $N$ independent experiments, the $i$th Pauli matrix $\s_i$ was measured $N_i$ times and obtained outcomes $+1$ and $-1$, each $n_{i}^{+}$ and $n_{i}^{-}$ times.
Then we have

\begin{theorem}\label{thm:main2}
In the above-mentioned generalized quantum state tomography, 
the Bloch ball $B$ should be regarded as a submanifold of a Riemannian manifold $(-1,1)^3$ endowed with a metric $g$ whose components at $\x\in (-1,1)^3$ are given, up to scaling, by
\begin{equation}\label{eqn:metric2}
 g_\x\left(\frac{\partial}{\partial\x_i},\;\frac{\partial}{\partial\x_j}\right)
 =\frac{\hat s_i\,\d_{ij}}{1-(\x_i)^2},\qquad (i,j\in\{1,2,3\}),
\end{equation}
where $\hat s_i:=N_i/N$. 
If the temporal estimate 
\[ 
 \hat\x=(\hat\x_1, \hat\x_2,\hat\x_3)
 :=\left(\frac{n_{1}^{+}-n_{1}^{-}}{N_1},\;\frac{n_{2}^{+}-n_{2}^{-}}{N_2},\;\frac{n_{3}^{+}-n_{3}^{-}}{N_3}\right)
\]
has fallen outside the Bloch ball $B$, the corrected estimate $\x^*=(\x^*_1, \x^*_2, \x^*_3)$ based on the maximum likelihood method is the orthogonal projection from $\hat\x$ onto the Bloch sphere $S$ with respect to the metric (\ref{eqn:metric2}), and is given by the unique solution of the simultaneous equations
\[
\x^*_i \left(1-(\x^*_i)^2\right)=\l \hat s_i\,(\hat\x_i-\x^*_i),\qquad (i\in\{1,2,3\})
\]
and
\[  (\x^*_1)^2+(\x^*_2)^2+(\x^*_3)^2=1, \]
where $\l$ is an auxiliary positive parameter. 
\end{theorem}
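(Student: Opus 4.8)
The plan is to derive the maximum likelihood estimate directly and show it coincides with the stated orthogonal projection, following the same route that establishes Theorem~\ref{thm:main1} but keeping track of the unequal sample sizes $N_i$. First I would write down the log-likelihood. For each direction $i$, the outcomes $\pm 1$ of $\s_i$ on the state $\r_\x$ occur with probabilities $(1\pm\x_i)/2$, so the data $(n_i^+, n_i^-)$ with $n_i^+ + n_i^- = N_i$ contribute $n_i^+\log\frac{1+\x_i}{2} + n_i^-\log\frac{1-\x_i}{2}$ to the log-likelihood. Summing over $i$ and dividing by $N$ gives, up to an additive constant,
\[
 \ell(\x) = \sum_{i=1}^3 \hat s_i\left(\frac{1+\hat\x_i}{2}\log(1+\x_i) + \frac{1-\hat\x_i}{2}\log(1-\x_i)\right),
\]
since $n_i^\pm = N_i(1\pm\hat\x_i)/2 = N\hat s_i(1\pm\hat\x_i)/2$. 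The problem is to maximize $\ell$ over the closed Bloch ball $B$.

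Next I would identify this as an information-geometric projection. The key observation is that $\ell(\x)$ differs only by a sign and a $\x$-independent term from the Kullback--Leibler divergence $D(\hat\x\,\|\,\x)$ between the product Bernoulli distribution indexed by $\hat\x$ and the one indexed by $\x$; maximizing $\ell$ is minimizing this divergence. The metric \eqref{eqn:metric2} is (up to the overall scaling already acknowledged in the statement) precisely the Fisher metric of this statistical model: $\partial_i\partial_j(-\ell)$ evaluated appropriately gives $\hat s_i\d_{ij}/(1-(\x_i)^2)$, which is why $(-1,1)^3$ carries this Riemannian structure. A standard fact — which I would either cite or prove by a short computation — is that the minimizer of a KL divergence onto a submanifold is the foot of the perpendicular with respect to the Fisher metric, because the gradient of the divergence at the minimizer must be normal to the submanifold in the Fisher inner product. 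Concretely, when the minimum over $B$ is attained on the boundary sphere $S$ (the case of interest, $\hat\x\notin B$), a Lagrange multiplier argument for $\max \ell(\x)$ subject to $\|\x\|^2 = 1$ gives $\partial_i\ell(\x^*) = 2\mu\,\x^*_i$ for some multiplier $\mu$. Computing
\[
 \partial_i\ell(\x) = \hat s_i\left(\frac{1+\hat\x_i}{2(1+\x_i)} - \frac{1-\hat\x_i}{2(1-\x_i)}\right) = \frac{\hat s_i(\hat\x_i - \x_i)}{1-(\x_i)^2},
\]
so the stationarity condition reads $\hat s_i(\hat\x_i-\x^*_i)/(1-(\x^*_i)^2) = 2\mu\,\x^*_i$, i.e. $\x^*_i(1-(\x^*_i)^2) = \l\,\hat s_i(\hat\x_i-\x^*_i)$ with $\l = 1/(2\mu)$, together with $\|\x^*\|^2 = 1$. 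This is exactly the claimed system. That $\l > 0$ follows because at the constrained maximum on $S$, moving from $\x^*$ toward $\hat\x$ must decrease $\ell$, forcing $\mu > 0$; equivalently, $\hat\x$ lies outside $B$ so $\|\hat\x\| > 1 = \|\x^*\|$ and the sign of $\sum_i \x^*_i(\hat\x_i - \x^*_i) = \sum_i \x^*_i\hat\x_i - 1$ need not be controlled directly, so I would instead argue positivity of $\l$ from the second-order/boundary condition. I should double-check that the interior critical point $\x=\hat\x$ is excluded precisely when $\hat\x\notin B$, so the maximum genuinely sits on $S$.

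The remaining task is \emph{uniqueness} of the solution, and this is the step I expect to be the main obstacle. Existence of a maximizer on the compact set $\overline B$ is immediate from continuity of $\ell$ (with the understanding that we work on $(-1,1)^3$, and since $\hat\x\in(-1,1)^3$ the supremum is not approached at the singular faces). For uniqueness I would exploit strict concavity: the function $x\mapsto a\log(1+x) + b\log(1-x)$ has second derivative $-a/(1+x)^2 - b/(1-x)^2 < 0$ for $a,b\ge 0$ not both zero, so $\ell$ is strictly concave on $(-1,1)^3$. A strictly concave function on a convex domain has at most one maximizer over any convex subset — but $\overline B$ is convex, so the ML estimate over the \emph{ball} is unique. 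What requires a little more care is that the stated equations parametrize exactly this unique point and have no spurious solutions: a solution of the Lagrange system is a constrained critical point of $\ell$ on $S$, and I would rule out its being a saddle or a constrained minimum by the concavity of $\ell$ together with convexity of $\overline B$ (any constrained critical point on $S$ other than the global maximizer would contradict strict concavity, since the segment from such a point to the true maximizer would have to be non-monotone). Hence the Lagrange system has a unique solution, equal to $\x^*$. Finally I would note that Theorem~\ref{thm:main1} is recovered by setting $N_i = N$, whence $\hat s_i = 1$ and the metric and equations reduce to \eqref{eqn:metric1} and the system of Theorem~\ref{thm:main1}.
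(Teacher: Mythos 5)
Your proof is correct in substance, but it takes a genuinely different route from the paper's. The paper proves Theorem~\ref{thm:main2} by recasting the experiment as a \emph{randomized tomography}: it embeds the problem in the five-dimensional simplex $\P(\O)$ with nuisance parameters $s=(s_1,s_2)$, shows that the leaves $M(s)$ (fixed $s$) and $E(\x)$ (fixed $\x$) form a mutually orthogonal dualistic foliation (Lemma~\ref{lem:foliation}), invokes the generalized Pythagorean theorem to conclude that the $\nabla^{(m)}$-projection may be restricted to the slice $M(\hat s)$ (Lemma~\ref{lem:foliation2}), and then repeats the Theorem~\ref{thm:main1} argument on that slice with the rescaled Fisher metric $\hat s_i\,\d_{ij}/(1-(\x_i)^2)$. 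You instead work directly with the explicit log-likelihood $\ell(\x)=\sum_i \hat s_i\bigl(\tfrac{1+\hat\x_i}{2}\log(1+\x_i)+\tfrac{1-\hat\x_i}{2}\log(1-\x_i)\bigr)$ in the Stokes coordinates, compute $\partial_i\ell=\hat s_i(\hat\x_i-\x_i)/(1-(\x_i)^2)$, and read off the Lagrange condition on $S$; since $\partial_i\ell$ is exactly the index-lowering of the displacement $\hat\x-\x^*$ by the metric (\ref{eqn:metric2}), the stationarity condition $\nabla\ell(\x^*)\parallel\x^*$ \emph{is} the $g$-orthogonality of the straight segment (the $\tilde\nabla^{(m)}$-geodesic) to $S$, so you do recover the geometric statement, not just the equations. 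Your approach is more elementary and self-contained, and your strict-concavity argument actually supplies the uniqueness and the exclusion of spurious constrained critical points more carefully than the paper does (the paper essentially asserts uniqueness and $\l>0$); what it forgoes is the structural content of the paper's proof --- the isostatistic identification of $(-1,1)^3$ with a statistical manifold and the nuisance-parameter foliation, which is what justifies the ``should be regarded as'' clause beyond the formal Fisher-information computation.

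Two small repairs. First, your sentence ``moving from $\x^*$ toward $\hat\x$ must decrease $\ell$'' has the direction backwards: by concavity $\ell$ \emph{increases} along the segment toward its unconstrained maximizer $\hat\x$. The clean argument for $\l>0$ is the one you gesture at afterwards: at a maximizer of $\ell$ over the convex set $\overline B$ attained on the boundary, $\nabla\ell(\x^*)$ must lie in the outward normal cone, so $\nabla\ell(\x^*)=2\m\x^*$ with $\m\ge 0$, and $\m\neq 0$ because $\nabla\ell$ vanishes only at $\hat\x\notin \overline B$; hence $\l=1/(2\m)>0$. Second, the appeal to the ``standard fact'' that KL-minimization is a Fisher-orthogonal projection is not needed (and, stated baldly, is imprecise, since the notion of projection depends on the choice of geodesic); your Lagrange computation already establishes the orthogonality directly, so you can simply drop that remark.
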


The paper is organized  as follows.
In Section \ref{sec:2}, we first review the maximum likelihood method from a geometrical point of view, and then prove Theorem~\ref{thm:main1} by establishing an isomorphism between the Stokes parameter space and the statistical manifold of independent probability distributions. 
In Section \ref{sec:3}, we introduce the notion of randomized tomography, and prove Theorem~\ref{thm:main2} by analyzing the statistical nature of randomized tomography using the technique of mutually orthogonal dualistic foliations. 
In section \ref{sec:4}, we devise an efficient algorithm for computing the maximum likelihood estimate $\x^*$.
Section \ref{sec:5} is devoted to conclusions. 
Throughout the paper, we make use of some basic knowledge of information geometry \cite{{AmariNagaoka},{AmariLN},{MurrayRice}}, 
and therefore, we give a brief overview of information geometry in Appendix for the reader's convenience.

\section{Proof of Theorem \ref{thm:main1}}\label{sec:2}

\subsection{Maximum likelihood method}

Let $\P(\O)$ denote the set of probability distributions 
on a finite sample space $\O$, i.e.,
\[
 \P(\O):=\left\{p:\O\to\R \; \left| \; p(\o)>0 \mbox{  for all $\o\in\O$, and } \sum_{\o\in\O}p(\o)=1 \right\}\right..
\]
This set may be identified with the $(|\O|-1)$-dimensional (open) simplex, where $|\O|$ denotes the number of elements in $\O$, 
and thus it is sometimes referred to as the {\em probability simplex} on $\O$.  
The set $\P(\O)$ is also regarded as a {\em statistical manifold} endowed with the {\em dualistic structure} $(g, \nabla^{(e)}, \nabla^{(m)})$, where $g$ is the {\em Fisher metric}, and $\nabla^{(e)}$ and $\nabla^{(m)}$ are the {\em exponential} and {\em mixture connections}, (cf., Appendix). 

Suppose that the state of the physical system at hand belongs to a (closed) subset $\M$ of $\P(\O)$, but we do not know which is the true state. 
We further assume that the probability distributions of $\M$ are faithfully parametrized by a finite dimensional parameter $\th$ as
\[ \M=\{p_\th(\o)\;|\; \th\in\Th\}. \] 
In this case, $\M$ is called a {\em parametric model}, 
and our task is to estimate the true value of the parameter $\th$ that specifies the true state. 
Suppose that, by $n$ independent experiments, we have obtained data $(x_1,x_2,\dots,x_n)\in\O^n$. 
This information is 
compressed into the {\em empirical distribution}, an element of $\P(\O)$ defined by
\begin{eqnarray*}
  \hat q_n(\o)
 &:=&\frac{\mbox{Number of occurrences of $\o$ in data $(x_1,x_2,\dots,x_n)$}}{n}\\
 &=&\frac{1}{n}\sum_{i=1}^n \d_{x_i}(\o)
\end{eqnarray*}
for each $\o\in\O$, where $\d_{x_i}(\o)$ is the Kronecker delta.
If $\hat q_n$ belongs to the model $\M$, then we have an estimate $\hat\th_n$ that satisfies $p_{\hat\th_n}=\hat q_n$. 
However, the empirical distribution $\hat q_n$ does not always belong to the model $\M$. 
When $\hat q_n\notin \M$, we need to find an alternative estimate from the data.
One of the standard method of finding an alternative estimate $p_{\hat\th_n}\in\M$ is the {\em maximum likelihood method}, 
in which one seeks the maximizer of the likelihood function 
\[ \th\longmapsto p_{\th}(x_1)p_{\th}(x_2)\dots p_{\th}(x_n), \]
within the domain $\Th$ of the parameter $\th$, so that
\begin{equation}\label{eqn:naiveMLE}
 \hat \th_n:=\argmax_{\th\in\Th} \left\{p_{\th}(x_1)p_{\th}(x_2)\cdots p_{\th}(x_n)\right\}.
\end{equation}
We can rewrite this relation as follows.
\begin{eqnarray*}
 \hat \th_n
 &=&\argmax_{\th\in\Th} \;\frac{1}{n}\sum_{i=1}^n \log p_{\th}(x_i) \\
 &=&\argmax_{\th\in\Th} \;\sum_{\o\in\O} \hat q_n(\o) \log p_{\th}(\o) \\
 &=&\argmin_{\th\in\Th} \;\sum_{\o\in\O} \hat q_n(\o) \left\{\log\hat q_n(\o)-\log p_{\th}(\o)\right\} \\
 &=&\argmin_{\th\in\Th} \; D(\hat q_n\|p_\th),
\end{eqnarray*}
where 
\[ D(q\|p):=\sum_{\o\in\O} q(\o) \log\frac{q(\o)}{p(\o)} \]
is the {\em Kullback-Leibler divergence} from $q$ to $p$. 
In other words, the maximum likelihood estimate \cite{note}
(MLE) $p_{\hat\th_n}$ is the point on $\M$ that is ``closest'' from the empirical distribution $\hat q_n$ as measured by the Kullback-Leibler divergence:
\begin{equation}\label{eqn:MLE1}
p_{\hat\th_n}=\argmin_{p\in\M} \; D(\hat q_n\|p).
\end{equation}
Due to the generalized Pythagorean theorem (cf., Appendix), the MLE is geometrically understood as the $\nabla^{(m)}$-projection from $\hat q_n$ to $\M$ or its boundary, as illustrated in Fig.~\ref{fig:mProj}.

\begin{figure}[t] 
	\begin{centering}
	\includegraphics[scale=0.6]{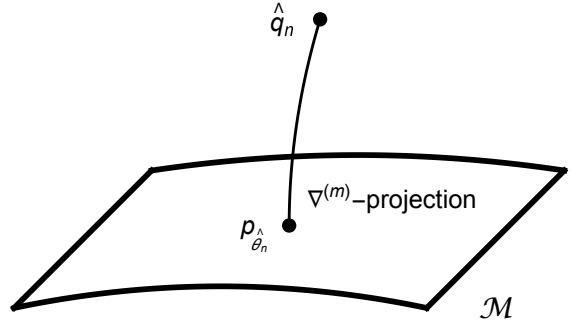}
	\par
	\end{centering}
	\caption{The maximum likelihood estimate $p_{\hat\th_n}$ is the minimizer of the function 
		$p\mapsto D(\hat q_n\| p)$ with respect to $p\in\M$, and is also understood as the
		$\nabla^{(m)}$-projection from the empirical distribution $\hat{q}_n$ to $\M$ or its boundary. 
	\label{fig:mProj}}
\end{figure}

\subsection{Manifold of product distributions}

Let us consider, for each $i=1,\dots,k$, a coin flipping model 
\[
 p_{\x_i}(\o_i)=\left\{\array{ll} 
 \displaystyle \frac{1+\x_i}{2},& (\o_i=+1) \\ \\
 \displaystyle \frac{1-\x_i}{2}, & (\o_i=-1) \endarray\right.
\]
on $\O=\{-1,+1\}$ having a one-dimensional parameter $\x_i\in (-1,1)$, 
and let us denote their product distribution by
\begin{equation}\label{eqn:independentMfd}
 p_{\x}(\o):=\prod_{i=1}^k p_{\x_i}(\o_i),
\end{equation}
where $\x:=(\x_1,\dots,\x_k)\in (-1,1)^k$ and $\o=(\o_1,\dots,\o_k)\in \O^k$. 
The set
\[ 
 \P(\O)^{\otimes k}:=\left\{p_\x(\o)\in\P(\O^k)\; \left| \; \x\in (-1,1)^k \right.\right\}, 
\]
comprising independent probability distributions, 
is regarded as a $k$-dimensional submanifold, having a (global) coordinate system $\x$, 
embedded in the $(2^k-1)$-dimensional statistical manifold $\P(\O^k)$.

The submanifold $\P(\O)^{\otimes k}$ is not $\nabla^{(m)}$-autoparallel (i.e., not a mixture family) unless $k=1$, 
but it is $\nabla^{(e)}$-autoparallel (i.e., an exponential family) because (\ref{eqn:independentMfd}) is rewritten as
\begin{eqnarray*}
 p_\x(\o)
 &=&\prod_{i=1}^k \exp\left[ \log\frac{p_{\x_i}(+1)}{p_{\x_i}(-1)}\,\d_{+1}(\o_i)+\log p_{\x_i}(-1)  \right] \\
 &=&\exp\left[ \left(\sum_{i=1}^k \log\frac{1+\x_i}{1-\x_i}\,\d_{+1}(\o_i)\right)+\left(\sum_{i=1}^k\log p_{\x_i}(-1)\right)  \right] \\
 &=&\exp\left[ \left(\sum_{i=1}^k \th^i F_i(\o)\right)-\overline\psi(\overline\th)  \right],
\end{eqnarray*}
where $F_i(\o):=\d_{+1}(\o_i)$, 
\begin{equation}\label{eqn:theta}
 \th^i:=\log\frac{1+\x_i}{1-\x_i},\qquad (i\in\{1,\dots,k\}), 
\end{equation}
and
\begin{equation}\label{eqn:psi}
 \overline\psi(\overline\th):=-\sum_{i=1}^k \log p_{\x_i}(-1)=\sum_{i=1}^k \log\left(1+e^{\th^i}\right)
\end{equation}
with $\overline\th:=(\th^1,\dots,\th^k)$. 
The parameters $\overline\th=(\th^1,\dots,\th^k)$ form a $\nabla^{(e)}$-affine coordinate system of $\P(\O)^{\otimes k}$, 
and its dual 
coordinate system $\overline\y=(\y_1,\dots,\y_k)$ is given by 
\begin{equation}\label{eqn:eta}
 \y_i:=\frac{\partial\overline\psi}{\partial\th^i}=\frac{e^{\th^i}}{1+e^{\th^i}}=\frac{1+\x_i}{2},\qquad (i\in\{1,\dots,k\}).
\end{equation}

Now let us return to the quantum state tomography. 
The conventional quantum state tomography is regarded as $N$-round experiments, each round being composed of three independent measurements of observables $\s_1, \s_2$, and $\s_3$. 
Mathematically, each round of the experinemt is isomorphic to the case $k=3$ in the above coin flipping model, with $\x=(\x_1,\x_2,\x_3)$ being the Stokes parameters. 
The condition   
\begin{equation}\label{eqn:BlochBall}
 \|\x\|^2=(\x_1)^2+(\x_2)^2+(\x_3)^2 \le1
\end{equation}
defines a subset $\B$ of $\P(\O)^{\otimes 3}$ through the parametrization (\ref{eqn:independentMfd}).
Given a temporal estimate $\hat\x=(\hat\x_1,\hat\x_2,\hat\x_3)$ for the Stokes parameters $\x$ through (\ref{eqn:checkXi}), 
let the corresponding product distribution be 
\[
 \hat q_{3N}(\o_1,\o_2,\o_3)
 :=\prod_{i=1}^3 p_{\hat\x_i}(\o_i),
\]
which is regarded as the empirical distribution for the quantum state tomography. 
Although the distribution $\hat q_{3N}$ belongs to $\P(\O)^{\otimes 3}$, it does not always belong to $\B$. 
Thus, in order to obtain a physically valid estimate that belongs to $\B$, we may apply the maximum likelihood method, to obtain   
\begin{equation}\label{eqn:MLE2}
p^*
:=\argmin_{p\in\B} \; D(\hat q_{3N}\|p).
\end{equation}
As mentioned in the previous subsection, this amounts to finding the $\nabla^{(m)}$-projection from $\hat q_{3N}$ to $\B$ or its boundary.
Although both $\hat q_{3N}$ and $p^*$ belong to $\P(\O)^{\otimes 3}$, 
the $\nabla^{(m)}$-geodesic connecting $\hat q_{3N}$ and $p^*$ in $\P(\O^3)$ 
does not stay within 
$\P(\O)^{\otimes 3}$ because $\P(\O)^{\otimes 3}$ is not $\nabla^{(m)}$-autoparallel in $\P(\O^3)$.
Consequently, the $\nabla^{(m)}$-projection from the empirical distribution $\hat q_{3N}$ to $\B$ 
in $\P(\O^3)$ cannot be immediately interpreted as a certain projection from the temporal estimate $\hat\x$ to the Bloch ball $B$ in the Stokes parameter space $(-1,1)^3$.

\begin{figure}[t] 
	\begin{centering}
	\includegraphics[scale=0.5]{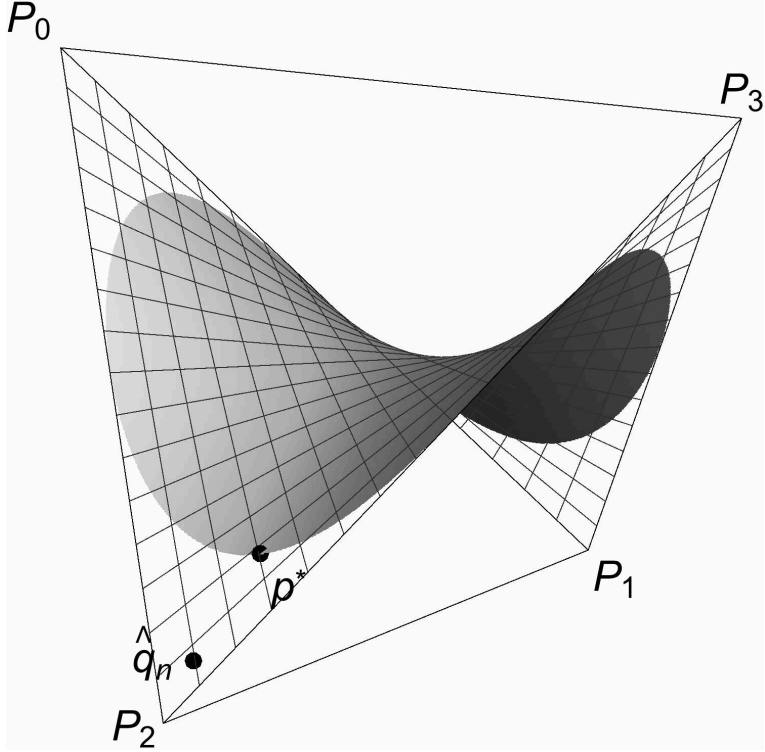}
	\par
	\end{centering}
	\caption{Geometry of two-dimensional quantum state tomography. 
		The set $\B$ of physically valid states (deformed grayish disk) is a subset of 
		the two-dimensional manifold $\P(\O)^{\otimes 2}$ of independent distributions (ruled surface) 
		embedded in the three-dimensional probability simplex $\P(\O^2)$ 
		(convex hull of $\{P_0, P_1, P_2, P_3\}$). 
		The maximum likelihood estimator $p^*$ is the point in $\B$
		that is ``closest'' from the empirical distribution $\hat q_n$ 
		as measured by the Kullback-Leibler divergence. 
	\label{fig:indMfd}}
\end{figure}

In order to get a better understanding of the above-mentioned difficulty, let us consider the case when $k=2$: 
this situation may be interpreted as the quantum state tomography restricted to the $\x_1\x_2$-plane. 
Fig.~\ref{fig:indMfd} depicts the relationship between $\P(\O^2)$ and $\P(\O)^{\otimes 2}$, as well as the subset $\B$ that corresponds to the quantum state space $\S(\C^2)$. 
The statistical manifold $\P(\O^2)$ is a 3-dimensional simplex represented by the convex hull of four points $P_0$, $P_1$, $P_2$, and $P_3$, each corresponding to the $\d$-measure on the events $(+1,+1)$, $(+1,-1)$, $(-1,+1)$, and $(-1,-1)$, respectively.
The ruled surface embedded in the simplex corresponds to the submanifold $\P(\O)^{\otimes 2}$ of independent  distributions, 
and the deformed grayish disk lying on the ruled surface represents the subset $\B$ of physically valid states satisfying $(\x_1)^2+(\x_2)^2\le 1$. 
Now suppose that the empirical distribution $\hat q_n\in \P(\O)^{\otimes 2}$ has fallen outside $\B$.
The MLE $p^*$ is then given by the point on $\B$ that is ``closest'' from $\hat q_n$ as measured by the Kullback-Leibler divergence.
Since the ruled surface $\P(\O)^{\otimes 2}$ is embedded in the simplex $\P(\O^2)$ as a ``curved'' surface, the $\nabla^{(m)}$-geodesic (straight line) connecting $\hat q_n$ and $p^*$ in $\P(\O^2)$ does not stay within $\P(\O)^{\otimes 2}$.
Recall that there is a one-to-one correspondence between the set $\P(\O)^{\otimes 2}$ of independent distributions and the Stokes parameter space $(-1,1)^2$.
Thus, the $\nabla^{(m)}$-geodesic connecting $\hat q_n$ and $p^*$ in $\P(\O^2)$ has no direct counterpart in the Stokes parameter space.

This difficulty can be surmounted by introducing a dualistic structure $(\overline g, \overline\nabla^{(e)}, \overline\nabla^{(m)})$ on the submanifold $\P(\O)^{\otimes k}$ as the restriction of the dualistic structure $(g, \nabla^{(e)}, \nabla^{(m)})$ of the ambient statistical manifold $\P(\O^k)$ onto $P(\O)^{\otimes k}$.
Since $\P(\O)^{\otimes k}$ is a $\nabla^{(e)}$-autoparallel submanifold of $\P(\O^k)$,   $\P(\O)^{\otimes k}$ is automatically dually flat with respect to the induced structure $(\overline g, \overline\nabla^{(e)}, \overline\nabla^{(m)})$, and the parameters $\overline\th$ and $\overline\y$ defined by (\ref{eqn:theta}) and (\ref{eqn:eta}) form mutually dual $\overline\nabla^{(e)}$- and $\overline\nabla^{(m)}$-affine coordinate systems of $\P(\O)^{\otimes k}$. 
Let us denote the canonical $\overline\nabla^{(m)}$-divergence on $\P(\O)^{\otimes k}$ by $\overline D(p\|q)$. 
Note that the canonical $\nabla^{(m)}$-divergence on the ambient manifold $\P(\O^k)$ is nothing but the Kullback-Leibler divergence $D(p\|q)$. 
The key observation is the following 

\begin{lemma}\label{lem:divergenceEquality} 
For any $p,q\in \P(\O)^{\otimes k}$, we have
\[ \overline D(p\|q)=D(p\|q). \]
\end{lemma}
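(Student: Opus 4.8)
The plan is to recognize that both divergences, when written in the natural coordinates of $\P(\O)^{\otimes k}$, reduce to the same elementary closed form, so that Lemma~\ref{lem:divergenceEquality} becomes a short bookkeeping verification rather than a genuine geometric argument. Since $\P(\O)^{\otimes k}$ is $\nabla^{(e)}$-autoparallel in $\P(\O^k)$, the induced structure $(\overline g,\overline\nabla^{(e)},\overline\nabla^{(m)})$ is dually flat, with $\overline\th$ of (\ref{eqn:theta}) a $\overline\nabla^{(e)}$-affine coordinate system, $\overline\y$ of (\ref{eqn:eta}) the dual $\overline\nabla^{(m)}$-affine one, $\overline\psi$ of (\ref{eqn:psi}) the $\overline\nabla^{(e)}$-potential (note $\y_i=\partial\overline\psi/\partial\th^i$), and its Legendre transform $\overline\phi$ the $\overline\nabla^{(m)}$-potential. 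The standard expression for the canonical $\overline\nabla^{(m)}$-divergence of a dually flat manifold (cf.\ Appendix) then reads, for $p,q\in\P(\O)^{\otimes k}$,
\[
 \overline D(p\|q)=\overline\phi(\overline\y_p)+\overline\psi(\overline\th_q)-\sum_{i=1}^k\y_{p,i}\,\th_q^i ,
\]
where $\overline\th_p,\overline\y_p$ and $\overline\th_q,\overline\y_q$ denote the coordinates of $p$ and $q$, the placement of the slots being fixed by the stated fact that the ambient canonical $\nabla^{(m)}$-divergence equals $D(p\|q)=\sum_{\o}p(\o)\log\frac{p(\o)}{q(\o)}$.

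The remaining work is purely computational. First I would eliminate $\overline\phi$ through the Legendre relation $\overline\phi(\overline\y_p)=\sum_i\th_p^i\y_{p,i}-\overline\psi(\overline\th_p)$, reducing the right-hand side above to $\overline\psi(\overline\th_q)-\overline\psi(\overline\th_p)+\sum_i(\th_p^i-\th_q^i)\y_{p,i}$. On the other side, the exponential form $\log p_\x(\o)=\sum_i\th^i F_i(\o)-\overline\psi(\overline\th)$ with $F_i(\o)=\d_{+1}(\o_i)$ gives $\log p(\o)-\log q(\o)=\sum_i(\th_p^i-\th_q^i)F_i(\o)-\overline\psi(\overline\th_p)+\overline\psi(\overline\th_q)$; averaging over $\o$ with respect to $p$ and using $E_p[F_i]=\y_{p,i}$, which is precisely (\ref{eqn:eta}), turns $D(p\|q)=\sum_\o p(\o)\{\log p(\o)-\log q(\o)\}$ into the very same expression, whence $\overline D(p\|q)=D(p\|q)$. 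An equivalent and perhaps more transparent route is by additivity over the tensor factors: $\overline\psi$ in (\ref{eqn:psi}), and hence $\overline\phi$, splits as a sum of the one-dimensional coin-flip potentials, so $\overline D$ decomposes into $\sum_{i=1}^k$ of the one-dimensional canonical divergences of $\P(\O)$; the Kullback-Leibler divergence is likewise additive over product distributions; and for $k=1$ one has $\P(\O)^{\otimes 1}=\P(\O)$, so the induced and the ambient structures coincide and the identity is trivial. Summation over $i$ then gives the lemma.

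I do not expect a genuine obstacle here: the substance is just the familiar fact that the canonical divergence of an exponential family coincides with the Kullback-Leibler divergence, applied to the exponential family $\P(\O)^{\otimes k}$. What requires care is bookkeeping --- fixing the orientation of the divergence so that the closed form is $\overline\phi(\overline\y_p)+\overline\psi(\overline\th_q)-\overline\y_p\cdot\overline\th_q$ consistently with the ambient convention, and confirming that $\overline\psi$ of (\ref{eqn:psi}) is really the potential of the \emph{induced} dually flat structure, i.e.\ that its Hessian $\partial^2\overline\psi/\partial\th^i\partial\th^j$ agrees with the Fisher metric of $\P(\O)^{\otimes k}$ obtained by restricting that of $\P(\O^k)$. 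Both follow from the general properties of exponential families and of $\nabla^{(e)}$-autoparallel submanifolds already invoked in the text.
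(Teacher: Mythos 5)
Your proof is correct, but it reaches the identity by a somewhat different mechanism than the paper. The paper's proof never computes the Kullback--Leibler divergence from its definition: it extends $(\overline\th,\overline\y)$ to mutually dual affine coordinates $(\th^1,\dots,\th^d)$, $(\y_1,\dots,\y_d)$ of the ambient simplex $\P(\O^k)$ so that $\P(\O)^{\otimes k}$ is the slice $\th^{k+1}=\dots=\th^d=0$ and the ambient potential restricts to $\overline\psi$, writes $D(p\|q)$ as the ambient canonical $\nabla^{(m)}$-divergence $\psi(\th(q))+\varphi(\y(p))-\th(q)\cdot\y(p)$, and observes that for $p,q$ on the slice all terms collapse to their barred counterparts. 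You instead stay entirely inside the submanifold: you expand $\log p(\o)-\log q(\o)$ using the exponential-family form $\log p_\x(\o)=\sum_i\th^iF_i(\o)-\overline\psi(\overline\th)$, take the expectation under $p$, and use $E_p[F_i]=\y_{p,i}$ from (\ref{eqn:eta}) to land on the same intermediate expression $\overline\psi(\overline\th_q)-\overline\psi(\overline\th_p)+(\overline\th_p-\overline\th_q)\cdot\overline\y_p$ that the intrinsic Legendre relation produces for $\overline D(p\|q)$; your additivity-over-factors variant is an equally valid shortcut. What your route buys is economy --- it avoids constructing the extended coordinate system altogether and makes the lemma the textbook fact that the canonical divergence of an exponential family is the KL divergence. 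What the paper's route buys is the extended $\y$-coordinates themselves, which are reused afterwards to explain the ``squashing'' picture of Fig.~\ref{fig:indMfdTop}; the paper explicitly says it gives its particular proof ``for the sake of later discussion.'' The one point you correctly flag but defer --- that $\overline\psi$ is the potential of the \emph{induced} structure, i.e.\ that the Hessian $\partial^2\overline\psi/\partial\th^i\partial\th^j$ equals the restricted Fisher metric --- is the standard exponential-family identity and is exactly what the paper also assumes when it declares $\overline\th,\overline\y$ mutually dual for $(\overline g,\overline\nabla^{(e)},\overline\nabla^{(m)})$, so no gap results.
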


\begin{proof} 
The assertion has been proved under a more general setting in \cite{FujiwaraAmari:1995}; however, we shall give an alternative proof for the sake of later discussion. 
Let $\overline\th=(\th^1,\dots,\th^k)$ and $\overline\y=(\y_1,\dots,\y_k)$ be mutually dual affine coordinate systems of $\P(\O)^{\otimes k}$ defined by (\ref{eqn:theta}) and  (\ref{eqn:eta}), respectively.
By extending these coordinate systems, we construct mutually dual $\nabla^{(e)}$- and $\nabla^{(m)}$-affine coordinate systems 
\begin{equation}\label{eqn:lem3theta}
 \th=(\th^1,\dots,\th^k;\,\th^{k+1},\dots,\th^d)
\end{equation}
and
\begin{equation}\label{eqn:lem3eta}
 \y=(\y_1,\dots,\y_k;\,\y_{k+1},\dots,\y_d) 
\end{equation}
of $\P(\O^k)$, with $d:=2^k-1$, 
such that the $\nabla^{(e)}$-autoparallel submanifold $\P(\O)^{\otimes k}$ corresponds to the points satisfying 
\begin{equation}\label{eqn:lem3submanifold}
 (\th^{k+1},\dots,\th^d)=(0,\dots,0).
\end{equation} 
Furthermore, let $\psi(\th)$ and $\varphi(\y)$ be the dual potentials for the dual affine coordinate systems $\th$ and $\y$ of $\P(\O^k)$ satisfying
\begin{equation}\label{eqn:lem3potentials}
 \psi(\th)+\varphi(\y)-\th \cdot \y=0,
\end{equation}
where $\cdot$ denotes the standard inner product, and
\begin{equation}\label{eqn:lem3OverlinePsi}
 \psi(\overline\th;\,0,\dots,0)= \overline\psi(\overline\th),
\end{equation}
where $\overline\psi(\overline\th)$ is the potential function on $\P(\O)^{\otimes k}$ defined by 
(\ref{eqn:psi}).
Note that the dual potential function $\overline\varphi(\overline\y)$ on $\P(\O)^{\otimes k}$ is defined by
\begin{equation}\label{eqn:lem3OverlineVarphi}
 \overline\varphi(\overline\y):=\overline\th\cdot\overline\y-\overline\psi(\overline\th). 
\end{equation}
Now, since the Kullback-Leibler divergence $D(p\|q)$ is the $\nabla^{(m)}\,(=\nabla^{(e)*})$-divergence, 
we have
\begin{eqnarray*}
 D(p\|q)
&=& \psi(\th(q))+\varphi(\y(p))-\th(q) \cdot \y(p) \\
&=& \psi(\th(q))+\{\th(p) \cdot \y(p)-\psi(\th(p))\}
	-\th(q) \cdot \y(p) \\
&=& \psi(\th(q))-\psi(\th(p))
	+\{\th(p)-\th(q)\} \cdot \y(p)
\end{eqnarray*}
where $\th(q)$, for instance, stands for the $\th$-coordinate of the point $q\in\P(\O^k)$, 
and the identity (\ref{eqn:lem3potentials}) was used in the second equality. 
Furthermore, since both $p$ and $q$ belong to the submanifold $\P(\O)^{\otimes k}$,
we have
\begin{eqnarray*}
 D(p\|q)
&=& \overline\psi(\overline\th(q))-\overline\psi(\overline\th(p))+(\overline\th(p)-\overline\th(q);\,0,\dots,0) \cdot \y(p) \\
&=& \overline\psi(\overline\th(q))-\overline\psi(\overline\th(p))+\{\overline\th(p)-\overline\th(q)\} \cdot \overline\y(p) \\
&=& \overline\psi(\overline\th(q))-\{\overline\th(p) \cdot \overline\y(p)-\overline\varphi(\overline\y(p))\}+\{\overline\th(p)-\overline\th(q)\} \cdot \overline\y(p) \\
&=& \overline\psi(\overline\th(q))+\overline\varphi(\overline\y(p))-\overline\th(q) \cdot \overline\y(p) \\
&=& \overline D(p\|q).
\end{eqnarray*}
Here, the first equality is due to (\ref{eqn:lem3submanifold}) and (\ref{eqn:lem3OverlinePsi}), and the third due to (\ref{eqn:lem3OverlineVarphi}). 
This proves the claim.
\end{proof}

It follows from Lemma \ref{lem:divergenceEquality} that the MLE (\ref{eqn:MLE2}) can be rewritten as
\begin{equation}\label{eqn:MLE3}
p^*=\argmin_{p\in\B} \; \overline D(\hat q_{3N}\|p).
\end{equation}
This relation allows us to interpret the MLE $p^*$ in terms of the intrinsic geometry of the manifold $\P(\O)^{\otimes 3}$, without reference to the ambient manifold $\P(\O^3)$. 
To be specific, the MLE $p^*$ is the $\overline\nabla^{(m)}$-projection from $\hat q_{3N}$ to $\B$ in $\P(\O)^{\otimes 3}$, and the $\overline\nabla^{(m)}$-geodesic connecting $\hat q_{3N}$ and $p^*$ stays (of course!) within $\P(\O)^{\otimes 3}$. 

\subsection{Relation between $\P(\O)^{\otimes k}$ and $(-1,1)^k$}

In the previous subsection, we interpreted the projection $\hat q_{3N}\mapsto p^*$ using an intrinsic geometry of $\P(\O)^{\otimes 3}$. 
In this subsection, we further interpret the process of finding the MLE 
using an intrinsic geometry of the Stokes parameter space $(-1,1)^3$. 

Firstly, we recall that the coordinate system $\overline\y=(\y_i)$ of $\P(\O)^{\otimes k}$ and the coordinate system $\x=(\x_i)$ of $(-1,1)^k$ are related by (\ref{eqn:eta}), i.e., 
\[
 \y_i=\frac{1+\x_i}{2}.
\]
This correspondence establishes a diffeomorphism $f:(-1,1)^k \to \P(\O)^{\otimes k}$.
Secondly, we introduce a Riemannian metric $\tilde{g}$ on $(-1,1)^k$ by
\[
 \tilde{g}_{p}(X,Y):=\overline g_{f(p)}(f_*X,f_*Y),\qquad \left(p\in (-1,1)^k \right)
\]
where $\overline g$ is the Fisher metric on $\P(\O)^{\otimes k}$, and $f_*$ is the differential map of $f$. 
Thirdly, we introduce an affine connection $\tilde{\nabla}^{(m)}$ on $(-1,1)^k$ such that the coordinate system $\x=(\x_i)$ becomes $\tilde{\nabla}^{(m)}$-affine. 
This is nothing but the Euclidean connection induced from the natural affine structure of the ambient space $\R^k$. 
Finally, we introduce another affine connection $\tilde{\nabla}^{(e)}$ on $(-1,1)^k$ such that it satisfies the duality
\[
 X\tilde{g}(Y,Z)=\tilde{g}(\tilde{\nabla}^{(e)}_XY,Z)+\tilde{g}(Y, \tilde{\nabla}^{(m)}_XZ).
\]
In this way, we can regard the space $(-1,1)^k$ as a dually flat statistical manifold endowed with the dualistic structure $(\tilde{g}, \tilde{\nabla}^{(e)}, \tilde{\nabla}^{(m)})$.

Let us calculate the metric $\tilde g$ explicitly. 
From the relation (\ref{eqn:independentMfd}), we have
\[
 \frac{\partial}{\partial\x_i} \log p_\x(\o)
= \frac{\partial}{\partial\x_i} \log p_{\x_i}(\o_i)
=\left\{\array{ll} 
 \displaystyle \frac{1}{1+\x_i},& (\o_i=+1) \\ \\
 \displaystyle \frac{-1}{1-\x_i}, & (\o_i=-1) \endarray\right..
\]
Consequently,
\begin{eqnarray*}
\tilde g_{p_\x}\left(\frac{\partial}{\partial\x_i}, \frac{\partial}{\partial\x_i}\right)
&=&
\sum_{\o\in\O^k} p_\x(\o) \left(\frac{\partial}{\partial\x_i} \log p_\x(\o)\right)^2  \\
&=&
\frac{1+\x_i}{2} \left(\frac{1}{1+\x_i}\right)^2+ \frac{1-\x_i}{2} \left(\frac{-1}{1-\x_i}\right)^2 \\
&=&
\frac{1}{1-(\x_i)^2},
\end{eqnarray*}
and for $i\neq j$, 
\begin{eqnarray*}
\tilde g_{p_\x}\left(\frac{\partial}{\partial\x_i}, \frac{\partial}{\partial\x_j}\right)
&=&
\sum_{\o\in\O^k} p_\x(\o) \left(\frac{\partial}{\partial\x_i} \log p_\x(\o)\right) 
		\left(\frac{\partial}{\partial\x_j} \log p_\x(\o)\right) \\
&=&
\left[\sum_{\o_i\in\O} p_{\x_i}(\o_i) \left(\frac{\partial}{\partial\x_i} \log p_{\x_i}(\o_i)\right)\right]
\left[\sum_{\o_j\in\O} p_{\x_j}(\o_j) \left(\frac{\partial}{\partial\x_j} \log p_{\x_j}(\o_j)\right)\right] \\
&=&
\left[\sum_{\o_i\in\O} \left(\frac{\partial}{\partial\x_i} p_{\x_i}(\o_i)\right)\right]
\left[\sum_{\o_j\in\O} \left(\frac{\partial}{\partial\x_j} p_{\x_j}(\o_j)\right)\right] \\ \\
&=&0.
\end{eqnarray*}
In summary,
\begin{equation}\label{eqn:FisherStokes}
\tilde g_{p_\x}\left(\frac{\partial}{\partial\x_i}, \frac{\partial}{\partial\x_j}\right)
=\frac{\d_{ij}}{1-(\x_i)^2}.
\end{equation}
When $k=3$, this is identical to (\ref{eqn:metric1}). 

Now let us proceed to investigating the relationship between $(-1,1)^k$ and $\P(\O)^{\otimes k}$. 
We say two statistical manifolds $(\tilde{M}, \tilde{g}, \tilde{\nabla}, \tilde{\nabla}^*)$ and $(\overline M, \overline g, \overline\nabla, \overline\nabla^*)$ are {\em statistically isomorphic}, or simply {\em isostatistic}, if there is a diffeomorphism $f:\tilde{M}\to \overline M$ such that 
\[
 \tilde{g}_p(X,Y)=\overline g_{f(p)}(f_*X,f_*Y),\quad
 f_*(\tilde{\nabla}_XY)_p=(\overline\nabla_{f_*X} f_*Y)_{f(p)},\quad
 f_*(\tilde{\nabla}^*_XY)_p=(\overline\nabla^*_{f_*X} f_*Y)_{f(p)}
\]
holds for all $p\in \tilde M$ and vector fields $X, Y$ on $\tilde M$.

\begin{lemma}\label{lem:isostatistic}
The manifolds $((-1,1)^k, \tilde{g}, \tilde{\nabla}^{(e)}, \tilde{\nabla}^{(m)})$ and $(\P(\O)^{\otimes k}, \overline g, \overline\nabla^{(e)}, \overline\nabla^{(m)})$ are isostatistic.
\end{lemma}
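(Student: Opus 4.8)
The plan is to verify the three defining conditions of an isostatistic map directly for the diffeomorphism $f:(-1,1)^k\to\P(\O)^{\otimes k}$ given by $\y_i=(1+\x_i)/2$. The metric condition is already built into the definition of $\tilde g$: we set $\tilde g_p(X,Y):=\overline g_{f(p)}(f_*X,f_*Y)$, so $f$ is by construction an isometry. It remains to check the two connection conditions, and for this the most economical route is to exploit duality: once I establish that $f$ intertwines the mixture connections, i.e. $f_*(\tilde\nabla^{(m)}_XY)=\overline\nabla^{(m)}_{f_*X}f_*Y$, the analogous identity for the exponential connections follows automatically, because $\tilde\nabla^{(e)}$ was defined precisely as the $\tilde g$-dual of $\tilde\nabla^{(m)}$, $\overline\nabla^{(e)}$ is the $\overline g$-dual of $\overline\nabla^{(m)}$, and $f$ is an isometry — duality is preserved under isometries.

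So the crux is the mixture-connection identity. Here I would use the characterization in terms of affine coordinate systems: a connection is determined by the requirement that a given coordinate system be affine (i.e. have vanishing Christoffel symbols). On the $\P(\O)^{\otimes k}$ side, $\overline\nabla^{(m)}$ is characterized by the fact that $\overline\y=(\y_1,\dots,\y_k)$ is a $\overline\nabla^{(m)}$-affine coordinate system (this is stated in the excerpt: $\overline\th$ and $\overline\y$ are mutually dual affine coordinates for the induced dually flat structure). On the $(-1,1)^k$ side, $\tilde\nabla^{(m)}$ was defined so that $\x=(\x_1,\dots,\x_k)$ is $\tilde\nabla^{(m)}$-affine — it is the Euclidean connection of $\R^k$. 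The map $f$ sends the coordinate $\x_i$ to $\y_i=(1+\x_i)/2$, which is an affine reparametrization of each coordinate; an affine change of coordinates does not change which connection makes the coordinate system flat, because the transition functions are affine and hence have vanishing second derivatives, so the Christoffel symbols transform homogeneously. Therefore $f$ carries the unique flat connection for which $\x$ is affine to the unique flat connection for which $\y$ is affine, which is exactly $\overline\nabla^{(m)}$. This gives $f_*(\tilde\nabla^{(m)}_XY)=\overline\nabla^{(m)}_{f_*X}f_*Y$.

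To make this rigorous at the level of Christoffel symbols, I would write out the standard transformation law $\overline\Gamma^{\,c}_{ab} = \frac{\partial y^c}{\partial x^i}\left(\frac{\partial x^j}{\partial y^a}\frac{\partial x^k}{\partial y^b}\,\tilde\Gamma^{\,i}_{jk} + \frac{\partial^2 x^i}{\partial y^a\partial y^b}\right)$, observe that $x^i=2y^i-1$ so $\partial^2 x^i/\partial y^a\partial y^b=0$, and conclude that $\tilde\Gamma\equiv 0$ in the $\x$-coordinates forces $\overline\Gamma\equiv 0$ in the $\y$-coordinates — matching the known vanishing of the $\overline\nabla^{(m)}$-Christoffel symbols in the $\overline\y$ coordinate system. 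Equivalently and more invariantly, I can just say: pushing forward the coordinate vector fields $\partial/\partial\x_i$ under $f$ yields (up to the constant factor $2$) the coordinate vector fields $\partial/\partial\y_i$, both families are parallel for their respective mixture connections, and covariant derivatives of parallel coordinate fields along coordinate fields vanish on both sides, so $f_*$ intertwines the two connections on this frame; by $C^\infty(\tilde M)$-bilinearity of the covariant derivative it intertwines them for all vector fields. I do not expect any real obstacle here — the computation is short; the only point requiring a moment's care is making explicit that the affineness of the coordinate change is what guarantees the connection, and not merely the metric, is preserved, and that the exponential side is then free by the duality argument rather than needing a separate computation of $\tilde\nabla^{(e)}$.
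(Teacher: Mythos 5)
Your proof is correct and follows essentially the same route as the paper's: the metric condition holds by construction, the mixture connections are intertwined because $f$ is affine with respect to the $\tilde\nabla^{(m)}$- and $\overline\nabla^{(m)}$-affine coordinates $\x$ and $\overline\y$, and the exponential connections are then matched by duality (the paper writes out the same one-line duality computation you invoke). The only blemish is the phrase ``$C^\infty(\tilde M)$-bilinearity of the covariant derivative''---a connection is only $\R$-linear plus Leibniz in its second slot, and the passage from coordinate frames to arbitrary vector fields is instead justified by your Christoffel-symbol transformation law (equivalently, by the fact that the difference of two connections is a tensor), so the argument stands as written.
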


\begin{proof}
Let $f:(-1,1)^k \to \P(\O)^{\otimes k}$ be the diffeomorphism defined above. 
Then 
\[ \tilde{g}_p(X,Y)=\overline g_{f(p)}(f_*X,f_*Y) \]
is obvious from the definition. 
Since $\x$ is a $\tilde{\nabla}^{(m)}$-affine coordinate system of $(-1,1)^k$ and $\y$ is a $\overline\nabla^{(m)}$-affine coordinate system of $\P(\O)^{\otimes k}$,  
\[ f_*(\tilde{\nabla}^{(m)}_{\partial^i} \partial^j)_p=0=(\overline\nabla^{(m)}_{f_*\partial^i} f_*\partial^j)_{f(p)} \]
for all $i,j\in\{1,\dots,k\}$, where $\partial^i:=\partial/\partial\x_i$. 
Finally, since $f$ is a diffeomorphism,
\begin{eqnarray*}
\tilde{g}_p(\tilde{\nabla}^{(e)}_XY, Z)
&=&
X_p\,\tilde{g}(Y, Z)-\tilde{g}_p(Y, \tilde{\nabla}^{(m)}_X Z) \\
&=&
(f_*X)_{f(p)} \overline g(f_*Y, f_*Z)-\overline g_{f(p)}(f_*Y, \overline\nabla^{(m)}_{f_*X} f_*Z) \\
&=&
\overline g_{f(p)}(\overline\nabla^{(e)}_{f_*X} f_*Y, f_*Z), 
\end{eqnarray*}
which leads us to
\[
f_*(\tilde{\nabla}^{(e)}_XY)_p=(\overline\nabla^{(e)}_{f_*X} f_*Y)_{f(p)}.
\]
This proves the assertion. 
\end{proof}

Returning to the quantum state tomography, Lemma \ref{lem:isostatistic} implies that the Stokes parameter space $(-1,1)^3$ endowed with the dualistic structure $(\tilde{g}, \tilde{\nabla}^{(e)}, \tilde{\nabla}^{(m)})$ 
can be identified with the statistical manifold $\P(\O)^{\otimes 3}$ of product distributions. 
Combining this fact with the results in the previous subsection, we have the following

\begin{corollary}\label{cor:StokesMLE}
The MLE $\x^*$ that satisfies $p^*=p_{\x^*}$ 
is the $\tilde{\nabla}^{(m)}$-projection from the temporal estimate $\hat\x$ to the Bloch ball $B$ in the Stokes parameter space $(-1,1)^3$.
\end{corollary}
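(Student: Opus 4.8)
The plan is to push the characterization (\ref{eqn:MLE3}) of the maximum likelihood estimate through the isostatistic diffeomorphism $f:(-1,1)^3\to\P(\O)^{\otimes 3}$ supplied by Lemma~\ref{lem:isostatistic}, and then read off the conclusion from the projection characterization of the MLE (the generalized Pythagorean theorem, cf. Appendix). First I would record the compatibilities that hold by construction: $f(\hat\x)=\hat q_{3N}$, $f(B)=\B$, and $f(\x^*)=p^*$. Since $f$ is isostatistic between two dually flat manifolds, it is an isometry for the metrics and intertwines $\tilde\nabla^{(m)}$ with $\overline\nabla^{(m)}$, hence it intertwines the associated canonical divergences: $\tilde D(p\|q)=\overline D(f(p)\|f(q))$ for all $p,q\in(-1,1)^3$, where $\tilde D$ is the canonical $\tilde\nabla^{(m)}$-divergence of $(-1,1)^3$. (Combined with Lemma~\ref{lem:divergenceEquality} this also reads $\tilde D(\hat\x\|\x)=D(\hat q_{3N}\|p_\x)$, convenient for the remark below.)

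Applying $f^{-1}$ to (\ref{eqn:MLE3}) then converts the minimization over $\B$ into
\[ \x^*=\argmin_{\x\in B}\tilde D(\hat\x\|\x), \]
so $\x^*$ is the point of the Bloch ball $B$ lying ``closest'' to the temporal estimate $\hat\x$ in the canonical divergence of the dually flat statistical manifold $((-1,1)^3,\tilde g,\tilde\nabla^{(e)},\tilde\nabla^{(m)})$. Now I would invoke the generalized Pythagorean theorem exactly as for the MLE above: the minimizer over a subset $B$ of $\tilde D(\hat\x\|\cdot)$ is the $\tilde\nabla^{(m)}$-projection from $\hat\x$ onto $B$ --- equal to $\hat\x$ itself if $\hat\x\in B$, and otherwise lying on the boundary sphere $S$ with the $\tilde\nabla^{(m)}$-geodesic from $\hat\x$ to $\x^*$ meeting $B$ $\tilde g$-orthogonally at $\x^*$; since $\x$ is a $\tilde\nabla^{(m)}$-affine coordinate system, this geodesic is simply the straight segment in the $\x$-coordinates. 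This is the assertion of the corollary.

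The argument is essentially bookkeeping once Lemmas~\ref{lem:divergenceEquality} and~\ref{lem:isostatistic} are available; the one genuine input is the projection interpretation of the divergence minimizer, i.e. the generalized Pythagorean theorem, together with the care needed when projecting onto the non-autoparallel set $B$ (equivalently, onto its boundary). Existence of the minimizer is immediate from compactness of $B$, and its uniqueness --- hence that of the projection --- follows from the strict convexity of $\x\mapsto\tilde D(\hat\x\|\x)$, which in turn follows from the decomposition $\tilde D(\hat\x\|\x)=\sum_{i=1}^{3}D(p_{\hat\x_i}\|p_{\x_i})$ into binary Kullback-Leibler divergences, each strictly convex in $\x_i$. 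The explicit Lagrange-multiplier equations determining $\x^*$ are the subject of Theorem~\ref{thm:main1} and are not needed here.
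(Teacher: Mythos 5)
Your proposal is correct and follows essentially the same route as the paper: the paper likewise obtains the corollary by combining the divergence identity of Lemma~\ref{lem:divergenceEquality} (which yields the intrinsic characterization (\ref{eqn:MLE3})) with the isostatistic identification of Lemma~\ref{lem:isostatistic}, and then reads off the $\tilde{\nabla}^{(m)}$-projection interpretation via the generalized Pythagorean theorem. Your added remarks on existence and uniqueness (compactness of $B$ plus strict convexity of $\x\mapsto\tilde D(\hat\x\|\x)$ via the factorization into binary Kullback--Leibler divergences) are sound and fill in details the paper leaves implicit.
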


\begin{figure}[t] 
	\begin{centering}
	\includegraphics[scale=0.5]{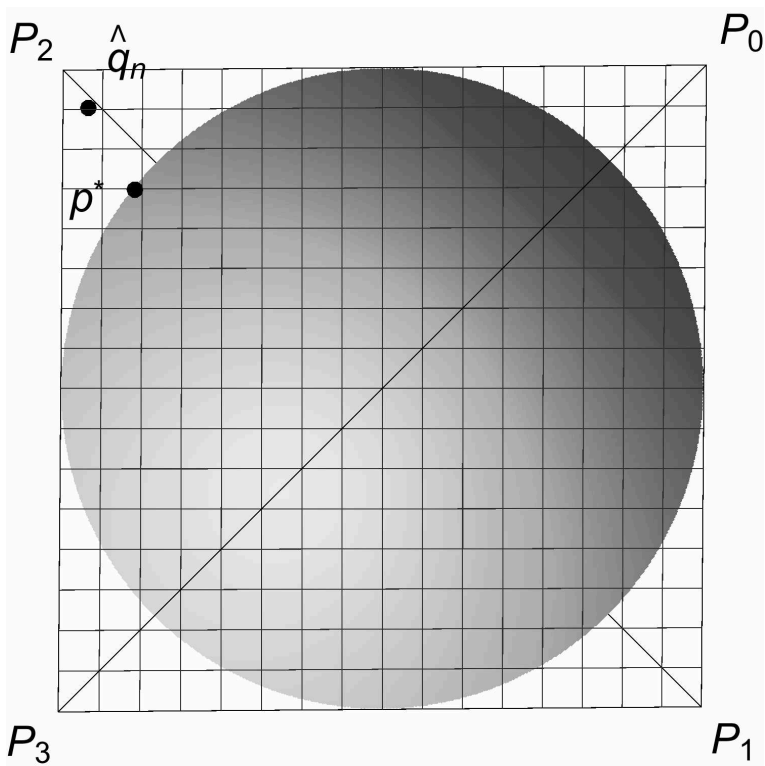}
	\par
	\end{centering}
	\caption{Geometry of two-dimensional quantum state tomography 
		as seen from the top of Fig.~\ref{fig:indMfd}. 
		This space is isostatistic to the Stokes parameter space $(-1,1)^2$, 
		and the grayish disk corresponds to (a slice of) the Bloch ball 
		representing the quantum state space $\S(\C^2)$. 
		In the space $\P(\O^2)$ of probability distributions, 
		the MLE $p^*$ was the point in $\B$
		that is ``closest'' from the empirical distribution $\hat q_n$ 
		as measured by the Kullback-Leibler divergence.  
		Likewise, in the Stokes parameter space, 
		the MLE $\x^*$  that satisfies $p^*=p_{\x^*}$ 
		is the $\tilde\nabla^{(m)}$-projection 
		from the temporal estimate $\hat\x$ to the Bloch ball $B$. 
	\label{fig:indMfdTop}}
\end{figure}

Incidentally, it should be noted that the isostatistic correspondence between $(-1,1)^k$ and $\P(\O)^{\otimes k}$ can be visualized by ``looking at $\P(\O)^{\otimes k}$ from the top.''
For instance, when $k=2$, the space $\P(\O)^{\otimes 2}$ was the ruled surface depicted in Fig.~\ref{fig:indMfd}. 
If we look at the space from the top (see Fig.~\ref{fig:indMfdTop}), we can find (a two-dimensional slice of) the Bloch ball embedded in the Stokes parameter space. 
This is because the diffeomorphism $f:(-1,1)^k\to \P(\O)^{\otimes k}$ is given by the affine transformation  (\ref{eqn:eta}).
Recall that, in the proof of Lemma~\ref{lem:divergenceEquality}, we introduced a $\nabla^{(m)}$-affine coordinate system $\y=(\y_1,\dots,\y_k;\, \y_{k+1},\dots, \y_d)$ of $\P(\O^k)$,
the first $k$ components of which gave a $\overline\nabla^{(m)}$-affine coordinate system of $\P(\O)^{\otimes k}$.
If we look at the space $\P(\O^k)$ from a certain angle in such a way that the remaining $(d-k)$-components $(\y_{k+1},\dots, \y_d)$ are ``squashed,'' then we can visualize the shape of $\P(\O)^{\otimes k}$, which is affinely isomorphic to $(-1,1)^k$. 
This is the underlying mechanism behind Fig.~\ref{fig:indMfdTop}. 


\subsection{Computation of MLE}

Let $\hat\x$ be the temporal estimate defined by (\ref{eqn:checkXi}), i.e., 
\[
 \hat\x=(\hat\x_1, \hat\x_2,\hat\x_3)
 :=\left(\frac{n_{1}^{+}-n_{1}^{-}}{N},\;\frac{n_{2}^{+}-n_{2}^{-}}{N},\;\frac{n_{3}^{+}-n_{3}^{-}}{N}\right).
\]
By a slight abuse of terminology, we shall call $\hat\x$ the {\em empirical distribution} on the Stokes parameter space. 
Suppose that the empirical distribution $\hat\x$ has fallen outside the Bloch ball $B$. 
Let $\x^*=(\x_1^*, \x^*_2,\x^*_3)$ be a point on the Bloch sphere $S$ in the Stokes parameter space. 
If $\x^*$ is the MLE, then we see from Corollary~\ref{cor:StokesMLE} that 
the $\tilde\nabla^{(m)}$-geodesic (i.e., the straight line) connecting $\x^*$ and $\hat\x$ must be orthogonal to the Bloch sphere $S$ at $\x^*$ 
with respect to the induced Riemannian metric $\tilde g$.
Stated otherwise, the tangent vector $V$ of that geodesic at $\x^*$, which is explicitly given by
\begin{equation}\label{eqn:tangentGeodesic}
 V:=\sum_{i=1}^3 (\hat\x_i-\x^*_i)\left(\frac{\partial}{\partial\x_i}\right)_{\x^*},
\end{equation}
satisfies the orthogonality
\begin{equation}\label{eqn:orthogonalMLE}
 \tilde g_{\x^*}(V,X)=0
\end{equation}
for all tangent vectors $X\in T_{\x^*} S$ of the Bloch sphere $S$ at $\x^*$. 
The MLE $\x^*$ can be obtained as a solution of the equation (\ref{eqn:orthogonalMLE}). 

Here we propose a method of computing the MLE $\x^*$.
In Euclidean geometry, the position vector $\overrightarrow{\x}=(\x_1,\x_2, \x_3)$ of a point $\x$ on the unit sphere $S$ is normal to $S$, 
in that they satisfy
\begin{equation}\label{eqn:normalityEuclid}
 \sum_{i=1}^3 \x_i X_i =0
\end{equation}
for all tangent vectors $\overrightarrow{X}=(X_1,X_2,X_3)\in T_\x S$ of $S$.
Using the relation (\ref{eqn:normalityEuclid}), we can find a tangent vector $\overrightarrow{n}$ at $\x\in S$ that is normal to $S$ with respect to the metric $\tilde g$. 
Let 
\[
\overrightarrow{n}=\sum_{i=1}^3 a_i \left(\frac{\partial}{\partial\x_i}\right)_{\x}
\]
and let us represent a tangent vector $\overrightarrow{X}\in T_\x S$ of $S$ as
\[
 \overrightarrow{X}=\sum_{i=1}^3 X_i \left(\frac{\partial}{\partial\x_i}\right)_{\x}.
\]
The orthogonality with respect to $\tilde g$ is then written as
\begin{eqnarray*}
 \tilde g_\x(\overrightarrow{n}, \overrightarrow{X})
 &=&
 \sum_{i,j=1}^3 a_i X_j \, \tilde g_\x\left(\frac{\partial}{\partial\x_i}, \frac{\partial}{\partial\x_j}\right) \\
 &=&
 \sum_{i,j=1}^3 a_i X_j \, \frac{\d_{ij}}{1-(\x_i)^2} \\
 &=&
 \sum_{i=1}^3 \frac{a_i}{1-(\x_i)^2} \, X_i \\ \\
 &=&0.
\end{eqnarray*}
In the second equality, we used the explicit formula (\ref{eqn:FisherStokes}) for the Riemannian metric $\tilde g$.
Comparing this relation with (\ref{eqn:normalityEuclid}), we see that the choice
\[
 a_i:=\x_i \left(1-(\x_i)^2\right)
\]
gives a desired tangent vector $\overrightarrow{n}$ that is normal to $S$ at $\x$ with respect to $\tilde g$.

The condition (\ref{eqn:orthogonalMLE}) for the MLE $\x^*$ is now restated that the tangent vector (\ref{eqn:tangentGeodesic}) of the $\tilde\nabla^{(m)}$-geodesic should be parallel to the normal vector $\overrightarrow{n}$ at $\x^*$, so that there is a positive real number $\l$ such that
$\overrightarrow{n}=\l V$, or equivalently, 
\[
 \x^*_i \left(1-(\x^*_i)^2\right)=\l (\hat\x_i-\x^*_i),\qquad (i\in\{1,2,3\}).
\]
The MLE $\x^*$ is obtained by the unique solution of these equations together with the normalizing condition
\[
  \sum_{i=1}^3(\x^*_i)^2=1,
\]
and the positivity condition $\l>0$. 
The proof of Theorem~\ref{thm:main1} is now complete.

\section{Proof of Theorem \ref{thm:main2}}\label{sec:3}

Generalizing Theorem~\ref{thm:main1} to Theorem~\ref{thm:main2} is, in a sense, straightforward: 
we need only change the metric $\tilde g$ on $(-1,1)^3$ from (\ref{eqn:metric1}) to (\ref{eqn:metric2}) 
in the proof of Lemma~\ref{lem:isostatistic}, 
based on the fact that the Fisher information of i.i.d.~extensions of a statistical model increases linearly in the degree of extensions. 
However, we here give an alternative proof, in order to reveal a different aspect of the quantum state tomography.

Let us consider the following experiment:  
One of the three observables $\s_1, \s_2$, and $\s_3$ is chosen at random with probability $s_1, s_2$, and $(1-s_1-s_2)$, respectively, 
and measure the chosen observable to yield an outcome either $+1$ or $-1$. 
We could estimate the unknown state $\r\in\S(\C^2)$ by repeating this randomized experiment. 
In particular, if $s_1=s_2=\frac{1}{3}$, this experiment is asymptotically equivalent to the standard quantum state tomography because of the law of large numbers. 
We shall call such an experiment a {\em randomized tomography} \cite{Yamagata:2011}. 

The sample space $\O$ for a randomized tomography is
\[
 \O=\{(\s_1,+1), (\s_1,-1),(\s_2,+1),(\s_2,-1),(\s_3,+1),(\s_3,-1)\}. 
\]
If the unknown state is specified by the Stokes parameters $\x=(\x_1,\x_2,\x_3)$, then the corresponding probability distribution on $\O$ is given by the probability vector
\[
 p_{(s,\x)}:=\left(s_1\frac{1+\x_1}{2}, s_1\frac{1-\x_1}{2}, s_2\frac{1+\x_2}{2}, s_2\frac{1-\x_1}{2}, 
 (1-s_1-s_2)\frac{1+\x_3}{2}, (1-s_1-s_2)\frac{1-\x_3}{2}\right),
\]
where $s:=(s_1,s_2)$ with the domain
\[
  D:=\{(s_1,s_2)\;|\; s_1>0,\; s_2>0,\; 1-s_1-s_2>0\}. 
\]
Note that the family 
\[
  \{p_{(s,\x)}\;|\;  s\in D,\, \x\in (-1,1)^3\}
\]
is identical to the five-dimensional probability simplex $\P(\O)$, and the parameters $(s,\x)$ form a coordinate system of $\P(\O)$. 
Since we are interested in estimating only the Stokes parameters $\x=(\x_1,\x_2,\x_3)$, the remaining parameters $s=(s_1,s_2)$ are regarded as nuisance parameters \cite{{LehmanCasella}, {AmariLN}} in the terminology of statistics. 
In what follows, $\P(\O)$ is regarded as a statistical manifold endowed with the dualistic structure 
$(g,\nabla^{(e)},\nabla^{(m)})$, where $g$ is the Fisher metric, and $\nabla^{(e)}$ and $\nabla^{(m)}$ are 
the exponential and mixture connections.

Let us consider the following submanifolds of $\P(\O)$: 
\[
 M(s):=\{p_{(s,\x)}\;|\;  \x\in (-1,1)^3\}
\]
for each $s\in D$, and
\[
 E(\x):=\{p_{(s,\x)}\;|\; s\in D\}
\]
for each $\x\in (-1,1)^3$.
Since $M(s)$ and $E(\x)$ are convex subsets of $\P(\O)$, they are $\nabla^{(m)}$-autoparallel. 
The following Lemma is the key to the estimation of $\x$ under the nuisance parameters $s$. 

\begin{lemma}\label{lem:foliation}
For each $\x\in(-1,1)^3$, the submanifold $E(\x)$ is $\nabla^{(e)}$-autoparallel. 
Furthermore, for each $s\in D$ and $\x\in(-1,1)^3$, 
the submanifolds $M(s)$ and $E(\x)$ are mutually orthogonal with respect to the Fisher metric $g$. 
\end{lemma}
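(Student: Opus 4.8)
The plan is to put $\log p_{(s,\x)}$ into exponential-family form and then read off both assertions from the resulting structure. Write each outcome as $\o=(\s_{i(\o)},\epsilon)$ with $i(\o)\in\{1,2,3\}$ the recorded observable and $\epsilon\in\{+1,-1\}$ its value, and let $G_a(\o)$ ($a=1,2$) be the indicator of the event $\{i(\o)=a\}$. Introducing the parameters $t_a:=\log\bigl(s_a/(1-s_1-s_2)\bigr)$ on $D$, a direct computation gives
\[
 \log p_{(s,\x)}(\o)=C_\x(\o)+t_1G_1(\o)+t_2G_2(\o)-\psi(t_1,t_2),
\]
where $C_\x(\o):=\log\frac{1+\epsilon\,\x_{i(\o)}}{2}$ depends on $\o$ alone once $\x$ is fixed, and $\psi(t_1,t_2):=-\log(1-s_1-s_2)$ is the normalizer. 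As $s$ ranges over $D$, $(t_1,t_2)$ ranges over $\R^2$, so $E(\x)$ is exactly the exponential family with reference log-density $C_\x$, sufficient statistics $(G_1,G_2)$, and canonical coordinates $(t_1,t_2)$; hence it is $\nabla^{(e)}$-autoparallel in $\P(\O)$. This settles the first claim.

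For the orthogonality I would use the pairing form of the Fisher metric, $g(X,Y)=\sum_{\o\in\O}X[p(\o)]\,Y[\log p(\o)]$, i.e.\ the sum over $\O$ of the $m$-representation of $X$ against the $e$-representation of $Y$. A basis of $T_{p_{(s,\x)}}E(\x)$ is $\{\partial/\partial s_1,\partial/\partial s_2\}$, and by the display above each $e$-representation $\partial_{s_a}\log p_{(s,\x)}=(\partial_{s_a}t_1)G_1+(\partial_{s_a}t_2)G_2-(\partial_{s_a}\psi)\mathbf{1}$ lies in $V:=\mathrm{span}\{\mathbf{1},G_1,G_2\}$. A basis of $T_{p_{(s,\x)}}M(s)$ is $\{\partial/\partial\x_1,\partial/\partial\x_2,\partial/\partial\x_3\}$, whose $m$-representations $\partial_{\x_i}p_{(s,\x)}$ annihilate $V$ in the unweighted $\ell^2(\O)$ inner product: $\sum_\o\partial_{\x_i}p_{(s,\x)}(\o)=\partial_{\x_i}(1)=0$, and $\sum_\o\partial_{\x_i}p_{(s,\x)}(\o)\,G_a(\o)=\partial_{\x_i}\bigl(\sum_{\o:\,i(\o)=a}p_{(s,\x)}(\o)\bigr)=\partial_{\x_i}(s_a)=0$, since the total mass of each which-observable block depends on $s$ only. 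Consequently $g(\partial_{\x_i},\partial_{s_a})=0$ for all $i$ and $a$, which is the asserted mutual orthogonality at the common point $p_{(s,\x)}$.

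I do not anticipate a real obstacle: structurally the content is that the choice of observable is ancillary for $\x$, so the marginal distribution over observables (the functions in $V$) carries no $\x$-information, and this is exactly what forces the cross terms to vanish. The one place to be a little careful is the $i(\o)=3$ block, whose weight is $1-s_1-s_2$ rather than a single coordinate; but $\sum_{\o:\,i(\o)=3}p_{(s,\x)}(\o)=1-s_1-s_2$ is again independent of $\x$, so the same annihilation argument covers it. (If one prefers to bypass the exponential-family rewriting, the identity $g(\partial_{\x_i},\partial_{s_a})=0$ can also be verified by the elementary two-term sum $\sum_{\epsilon=\pm1}\tfrac12\epsilon=0$ within each observable block; the structural argument is preferable here because it simultaneously exhibits the pair $\{M(s)\}$, $\{E(\x)\}$ as mutually orthogonal dualistic foliations, the viewpoint used in the remainder of the proof of Theorem~\ref{thm:main2}.)
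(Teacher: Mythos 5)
Your proposal is correct, and it reaches the Lemma by a genuinely different route from the paper's. The paper passes to the mixture-affine coordinates $\y=(s_1,s_2,s_1\x_1,s_2\x_2,(1-s_1-s_2)\x_3)$, computes the Legendre-dual exponential coordinates $\th^1,\dots,\th^5$ of $\P(\O)$ explicitly from the negative-entropy potential, observes that $(\th^3,\th^4,\th^5)$ depend on $\x$ alone so that $E(\x)$ is a slice with those coordinates fixed (hence $\nabla^{(e)}$-autoparallel), and then obtains orthogonality from the general relation $g(\partial/\partial\th^i,\partial/\partial\y_j)=\d_i^j$ in the mixed coordinate system $(\y_1,\y_2;\,\th^3,\th^4,\th^5)$. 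You instead exhibit $E(\x)$ directly as a two-dimensional exponential family with canonical parameters $t_a=\log\bigl(s_a/(1-s_1-s_2)\bigr)$ and sufficient statistics the indicators $G_a$, which yields autoparallelism by the characterization recalled in the Appendix, and you verify orthogonality by pairing the $e$-representations of $\partial/\partial s_a$ (which lie in $\mathrm{span}\{\mathbf 1,G_1,G_2\}$) against the $m$-representations of $\partial/\partial\x_i$ (which annihilate that span because the marginal law of the observable label is $(s_1,s_2,1-s_1-s_2)$, independent of $\x$). All the computations check out, including the $i(\o)=3$ block and the identity $g(X,Y)=\sum_\o X[p(\o)]\,Y[\log p(\o)]$. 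What the paper's route buys is the explicit mixed coordinate system, which underlies the foliation picture of Fig.~\ref{fig:foliation} and feeds directly into the Pythagorean argument of Lemma~\ref{lem:foliation2}; what your route buys is economy and statistical transparency --- the cross terms vanish precisely because the choice of observable is ancillary for $\x$ --- without ever computing the dual potential or the $\th$-coordinates.
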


\begin{proof}
Let us change the coordinate system $(s,\x)=(s_1,s_2, \x_1,\x_2,\x_3)$ into
\[
 \y:=(\y_1,\,\y_2,\,\y_3,\,\y_4,\,\y_5):=(s_1,\, s_2,\, s_1 \x_1,\, s_2\x_2,\, (1-s_1-s_2)\x_3).
\]
With this coordinate transformation, the probability vector $p_{(s,\x)}$ is rewritten as
\[
 p_\y:=\left(\frac{\y_1+\y_3}{2}, \frac{\y_1-\y_3}{2}, \frac{\y_2+\y_4}{2}, \frac{\y_2-\y_4}{2}, 
 \frac{1-\y_1-\y_2+\y_5}{2}, \frac{1-\y_1-\y_2-\y_5}{2}\right).
\]
We see from this expression that the coordinate system $\y:=(\y_i)_{1\le i\le 5}$ is $\nabla^{(m)}$-affine. 
The potential function for $\y$ is given by the negative entropy
\[ \varphi(\y):=\sum_{\o\in\O} p_\y(\o) \log p_\y(\o), \]
and the dual $\nabla^{(e)}$-affine coordinate system $\th=(\th^i)_{1\le i\le 5}$ is given by
\[
 \th^i:=\frac{\partial \varphi}{\partial \y_i}. 
\]
By direct computation, we have
\begin{eqnarray*}
&&\th^1=\frac{1}{2}\log\frac{(\y_1+\y_3)(\y_1-\y_3)}{(1-\y_1-\y_2+\y_5)(1-\y_1-\y_2-\y_5)}
	=\frac{1}{2}\log\left[\left(\frac{s_1}{1-s_1-s_2}\right)^2 \frac{1-(\x_1)^2}{1-(\x_3)^2}\right], \\ \\
&&\th^2=\frac{1}{2}\log\frac{(\y_2+\y_4)(\y_2-\y_4)}{(1-\y_1-\y_2+\y_5)(1-\y_1-\y_2-\y_5)}
	=\frac{1}{2}\log\left[\left(\frac{s_2}{1-s_1-s_2}\right)^2 \frac{1-(\x_2)^2}{1-(\x_3)^2}\right], \\ \\
&&\th^3=\frac{1}{2}\log\frac{\y_1+\y_3}{\y_1-\y_3}=\frac{1}{2}\log\frac{1+\x_1}{1-\x_1}, \\ \\
&&\th^4=\frac{1}{2}\log\frac{\y_2+\y_4}{\y_2-\y_4}=\frac{1}{2}\log\frac{1+\x_2}{1-\x_2}, \\ \\
&&\th^5=\frac{1}{2}\log\frac{1-\y_1-\y_2+\y_5}{1-\y_1-\y_2-\y_5}=\frac{1}{2}\log\frac{1+\x_3}{1-\x_3}. 
\end{eqnarray*}
Thus, fixing $\x$ is equivalent to fixing the three coordinates $(\th^3,\th^4,\th^5)$, 
and the submanifold $E(\x)$ is generated by changing the remaining two parameters $(\th^1, \th^2)$. 
This implies that $E(\x)$ is $\nabla^{(e)}$-autoparallel, proving the first part of the claim. 

To prove the second part, let us introduce a mixed coordinate system \cite{AmariNagaoka}
\[  (\y_1,\y_2;\,\th^3,\th^4,\th^5) \] 
of $\S(\O)$. 
Since $(\y_1, \y_2)=(s_1,s_2)$, the submanifold $M(s)$ is rewritten as
\[
 M(s)=\{p_{(s,\x)}\;|\; \mbox{$(\y_1,\y_2)$ are fixed and $(\th^3,\th^4,\th^5)$ are arbitrary}\}.
\]
On the other hand, as was seen in the above, the submanifold $E(\x)$ is rewritten as
\[
 E(\x)=\{p_{(s,\x)}\;|\; \mbox{$(\th^3,\th^4,\th^5)$ are fixed and $(\y_1,\y_2)$ are arbitrary} \}.
\]
Thus the general orthogonality relation
\[
 g\left(\frac{\partial}{\partial\th^i}, \frac{\partial}{\partial\y_j}\right)=\d_i^j
\]
proves that $M(s)$ and $E(\x)$ are orthogonal to each other. 
\end{proof}

\begin{figure}[t] 
	\begin{centering}
	\includegraphics[scale=0.5]{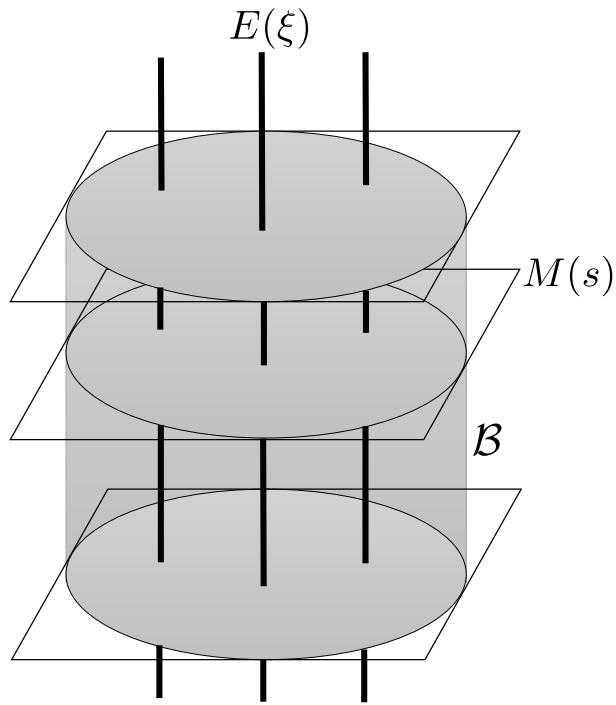}
	\par
	\end{centering}
	\caption{Mutually orthogonal dualistic foliation of $\P(\O)$ 
	based on $M(s)$ and $E(\x)$. 
	Each section $M(s)$ is affinely isomorphic to the Stokes parameter space $(-1,1)^3$. 
	The grayish cylindrical area indicates the subset $\B=\{p_{(s,\x)}| s\in D,\,\x\in B\}$ 
	of $\P(\O)$ that corresponds to the Bloch ball $B$. 
	In particular, for each $s\in D$, the intersection $M(s) \cap \B$ is affinely isomorphic 
	to the Bloch ball $B$.
	\label{fig:foliation}}
\end{figure}

Lemma~\ref{lem:foliation} implies that the manifold $\P(\O)$ is decomposed into a mutually orthogonal dualistic foliation based on the submanifolds $M(s)$ and $E(\x)$, as illustrated in Fig.~\ref{fig:foliation}. 

Let us get down to the problem of estimating the unknown Stokes parameters $\x$ using the randomized tomography. 
Suppose that, among $N$ independent experiments of randomized tomography, 
the $i$th Pauli matrix $\s_i$ was measured $N_i$ times and obtained outcomes $+1$ and $-1$, each $n_i^+$ and $n_i^-$ times. 
Then a temporal estimate $(\hat s,\hat\x)$ for the parameters $(s,\x)$ are
\[
 \hat s=\left(\frac{N_1}{N}, \frac{N_2}{N}\right)
\]
and
\[
 \hat\x
 =\left(\frac{n_1^+ -n_1^-}{N_1}, \frac{n_2^+ -n_2^-}{N_2}, \frac{n_3^+ -n_3^-}{N_3} \right).
\]
If $\hat\x$ has fallen outside the Bloch ball $B$, we may find a corrected estimate by the maximum likelihood method. 
First of all, the empirical distribution $\hat q_N\in\S(\O)$ is given by
\[
 \hat q_N:=p_{(\hat s,\hat\x)}. 
\]
On the other hand, the Bloch ball $B$ in the Stokes parameter space $(-1,1)^3$ corresponds to the subset
\[
 \B:=\{p_{(s,\x)}\;|\; s\in D,\;\x\in B\}
\]
of $\P(\O)$, (cf., Fig.~\ref{fig:foliation}).
The MLE $p^*$ in $\P(\O)$ is then given by 
\begin{equation}\label{eqn:foliationMLE}
 p^*=\argmin_{p\in\B} D(\hat q_N\| p).
\end{equation}
This is the $\nabla^{(m)}$-projection from the empirical distribution $\hat q_N$ to $\B$.
A crucial observation is the following

\begin{lemma}\label{lem:foliation2}
The minimum in (\ref{eqn:foliationMLE}) is achieved on $M(\hat s)\cap\B$.  
\end{lemma}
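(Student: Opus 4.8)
The plan is to use the mutually orthogonal dualistic foliation established in Lemma~\ref{lem:foliation} together with a generalized-Pythagorean decomposition of the Kullback-Leibler divergence. First I would recall that the empirical distribution is $\hat q_N=p_{(\hat s,\hat\x)}$, which lies on the leaf $M(\hat s)$, and that any candidate $p=p_{(s,\x)}\in\B$ has $\x\in B$ but \emph{a priori} arbitrary $s\in D$. The point is to show that moving $s$ away from $\hat s$ can only increase $D(\hat q_N\|p)$, so the minimizer must satisfy $s=\hat s$, i.e.\ lie in $M(\hat s)\cap\B$.

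The key step is a Pythagorean-type identity. Fix $\x\in B$ and consider the divergence $D(\hat q_N\|p_{(s,\x)})$ as a function of $s$. Because $E(\x)$ is $\nabla^{(e)}$-autoparallel (Lemma~\ref{lem:foliation}) and $M(\hat s)$ is $\nabla^{(m)}$-autoparallel, and because these two submanifolds meet orthogonally at the point $p_{(\hat s,\x)}$, the generalized Pythagorean theorem (cf.\ Appendix) applies to the triangle with vertices $\hat q_N=p_{(\hat s,\hat\x)}$, the ``foot'' $p_{(\hat s,\x)}$, and $p_{(s,\x)}$: the $\nabla^{(m)}$-geodesic from $\hat q_N$ to $p_{(\hat s,\x)}$ stays in $M(\hat s)$, the $\nabla^{(e)}$-geodesic from $p_{(\hat s,\x)}$ to $p_{(s,\x)}$ stays in $E(\x)$, and these are orthogonal at $p_{(\hat s,\x)}$. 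Hence
\begin{equation*}
 D(\hat q_N\|p_{(s,\x)})=D(\hat q_N\|p_{(\hat s,\x)})+D(p_{(\hat s,\x)}\|p_{(s,\x)}).
\end{equation*}
Since the second term is nonnegative and vanishes exactly when $s=\hat s$, for every fixed $\x\in B$ the map $s\mapsto D(\hat q_N\|p_{(s,\x)})$ is minimized at $s=\hat s$. Therefore
\begin{equation*}
 \min_{p\in\B}D(\hat q_N\|p)=\min_{\x\in B}\,\min_{s\in D}D(\hat q_N\|p_{(s,\x)})=\min_{\x\in B}D(\hat q_N\|p_{(\hat s,\x)}),
\end{equation*}
which is precisely the minimum over $M(\hat s)\cap\B$, and any minimizer achieves it with $s=\hat s$.

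The main obstacle I anticipate is verifying the hypotheses of the generalized Pythagorean theorem rigorously, namely that the three relevant geodesics lie in the claimed autoparallel submanifolds and that the orthogonality of $M(\hat s)$ and $E(\x)$ at the intersection point is exactly the metric orthogonality needed in the theorem's statement. These are supplied by Lemma~\ref{lem:foliation} and the mixed-coordinate orthogonality relation $g(\partial/\partial\th^i,\partial/\partial\y_j)=\d_i^j$ used in its proof, so the argument should go through cleanly; one only has to be careful that the foot $p_{(\hat s,\x)}$ indeed lies in $\B$ (it does, since $\x\in B$) so that it is an admissible competitor. A minor alternative, if one prefers to avoid invoking the Pythagorean theorem directly, is to compute $\partial_{s}D(\hat q_N\|p_{(s,\x)})$ at $s=\hat s$ and show it vanishes while the second derivative is positive-definite, using that the $s$-directions span $T_{p_{(\hat s,\x)}}M(\hat s)^{\perp}$ and that $\hat q_N\in M(\hat s)$; but the foliation argument is cleaner and is the one I would present.
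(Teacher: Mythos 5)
Your proof is correct and is essentially identical to the paper's: both apply the generalized Pythagorean theorem at the foot $p_{(\hat s,\x)}$, using that the $\nabla^{(m)}$-geodesic from $\hat q_N$ lies in the $\nabla^{(m)}$-autoparallel leaf $M(\hat s)$ while the $\nabla^{(e)}$-geodesic to $p_{(s,\x)}$ lies in the $\nabla^{(e)}$-autoparallel leaf $E(\x)$, these being orthogonal by Lemma~\ref{lem:foliation}, so that $D(\hat q_N\|p_{(s,\x)})=D(\hat q_N\|p_{(\hat s,\x)})+D(p_{(\hat s,\x)}\|p_{(s,\x)})\ge D(\hat q_N\|p_{(\hat s,\x)})$ with equality iff $s=\hat s$. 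Your additional care in checking that the foot is an admissible competitor in $\B$ is a nice touch but does not change the argument.
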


\begin{proof}
Let us take a point $p_{(s,\x)}\in\B$ arbitrarily. 
It then follows from the mutually orthogonal dualistic foliation of $\P(\O)$ established in Lemma~\ref{lem:foliation} that 
\begin{eqnarray*}
 D(\hat q_N\| p_{(s,\x)})
 &=&D(p_{(\hat s,\hat\x)}\| p_{(s,\x)}) \\
 &=&D(p_{(\hat s,\hat\x)}\| p_{(\hat s,\x)})+D(p_{(\hat s,\x)}\| p_{(s,\x)}) \\
 &\ge&D(p_{(\hat s,\hat\x)}\| p_{(\hat s,\x)}).
\end{eqnarray*}
In the second equality, the generalized Pythagorean theorem was used. 
Consequently, 
\[
 \min_{\x\in B} D(p_{(\hat s,\hat\x)}\| p_{(s,\x)}) \ge \min_{\x\in B} D(p_{(\hat s,\hat\x)}\| p_{(\hat s,\x)})
\]
for all $s\in D$, and the lower bound is achieved if and only if $s=\hat s$.
\end{proof}

\begin{figure}[t] 
	\begin{centering}
	\includegraphics[scale=0.5]{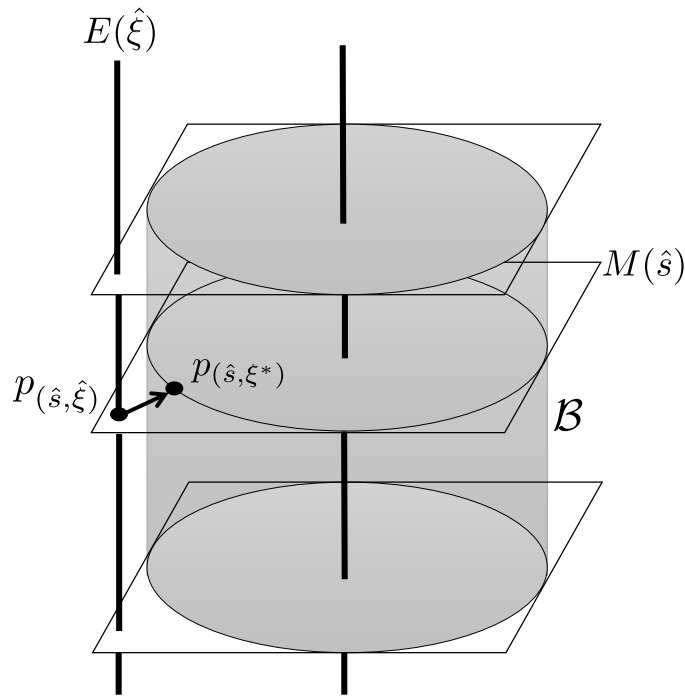}
	\par
	\end{centering}
	\caption{The maximum likelihood method in the framework of randomized tomography.
	Given a temporal estimate $(\hat s, \hat\x)$ with $\hat\x\notin B$, 
	we can restrict ourselves to the slice $M(\hat s)$ as the search space for the MLE $p^*$, 
	and $p^*=p_{(\hat s, \x^*)}$ 
	is the $\nabla^{(m)}$-projection from the empirical distribution $p_{(\hat s,\hat\x)}$ 
	to $\B$ on the slice $M(\hat s)$. 
	\label{fig:foliation2}}
\end{figure}

The geometrical implication of Lemma~\ref{lem:foliation2} is illustrated in Fig.~\ref{fig:foliation2}. 
The MLE $p^*=p_{(\hat s, \x^*)}$ is the $\nabla^{(m)}$-projection from the empirical distribution $p_{(\hat s,\hat\x)}$ to $\B$ on the slice $M(\hat s)$.  

Now we are ready to prove Theorem~\ref{thm:main2}.
Suppose we are given a temporal estimate $(\hat s, \hat\x)$ with $\hat\x\notin B$.
Due to Lemma~\ref{lem:foliation2}, we can restrict ourselves to the slice $M(\hat s)$ as the search space for the MLE $p^*$. 
Since the slice $M(\hat s)$ is affinely isomorphic to the Stokes parameter space $(-1,1)^3$, we can introduce a dualistic structure $(\tilde g, \tilde\nabla^{(e)}, \tilde\nabla^{(m)})$ on $(-1,1)^3$ in the following way. 
Firstly, in a quite similar way to the derivation of (\ref{eqn:FisherStokes}), it is shown that 
the components of the Fisher metric $g$ on the section $M(\hat s)$ with respect to the coordinate system $\x=(\x_1,\x_2,\x_3)$ are given by
\[ 
  g_{(\hat s,\x)}\left(\frac{\partial}{\partial\x_i}, \frac{\partial}{\partial\x_j}\right)
  =\frac{\hat s_i\, \d_{ij}}{1-(\x_i)^2},
\] 
where $\hat s_3:=1-\hat s_1-\hat s_2$. 
We identify this metric with $\tilde g$, i.e., 
\[
 \tilde g_{\x}\left(\frac{\partial}{\partial\x_i}, \frac{\partial}{\partial\x_j}\right)
 :=\frac{\hat s_i\, \d_{ij}}{1-(\x_i)^2}.
\]
Secondly, the mixture connection $\tilde\nabla^{(m)}$ is defined so that the coordinate system $\x=(\x_1,\x_2,\x_3)$ of $(-1,1)^3$ becomes $\tilde\nabla^{(m)}$-affine.
Finally, the dual connection $\tilde\nabla^{(e)}$ is defined by the duality
\[
 \tilde g(\tilde\nabla^{(e)}_XY,Z):=X\tilde g(Y,Z)-\tilde g(Y, \tilde\nabla^{(e)}_XZ).
\]
It is shown, in a quite similar way to the proof of Lemma~\ref{lem:isostatistic}, that the statistical manifold $((-1,1)^3, \tilde g, \tilde\nabla^{(e)}, \tilde\nabla^{(m)})$ is isostatistic to the manifold $M(s)$ with a dualistic structure defined by the restriction of $(g, \nabla^{(e)}, \nabla^{(m)})$ to $M(s)$. 
Thus, the MLE $\x^*$ in the Stokes parameter space is given by the $\tilde\nabla^{(m)}$-projection from $\hat\x$ to the Bloch sphere $S$ with respect to the metric $\tilde g$. 
This proves the first part of Theorem~\ref{thm:main2}.
The remainder of Theorem~\ref{thm:main2} is proved in the same way as the corresponding part of Theorem~\ref{thm:main1}. 

\begin{figure}[t] 
	\begin{centering}
	\includegraphics[scale=0.8]{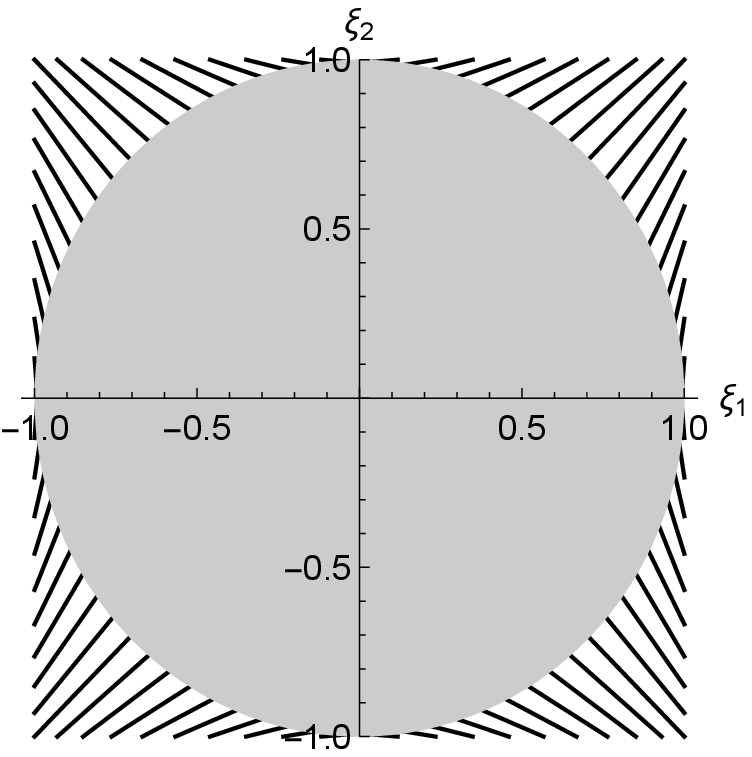}$\quad$
	\includegraphics[scale=0.8]{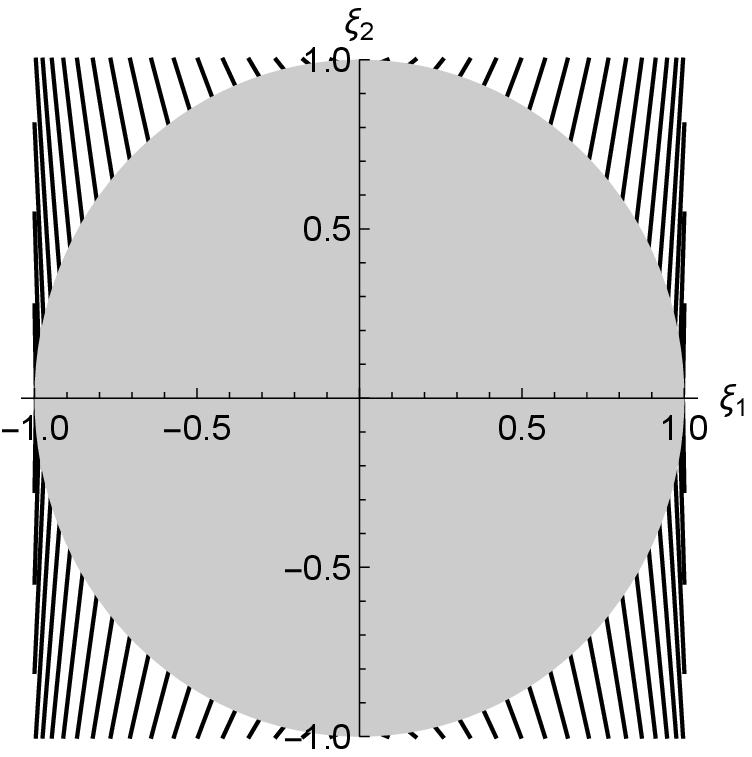}
	\par
	\end{centering}
	\caption{The trajectories of $\tilde\nabla^{(m)}$-projections on the $\x_1\x_2$-plane 
	that give the MLE $p^*$ 
	when $N_1:N_2=1:1$ (left), and $N_1:N_2=5:1$ (right).
	The change of $\x_1$-coordinate relative to the change of $\x_2$-coordinate 
	along each trajectory is less noticeable in the right panel than in the left panel. 
	This is because a tomography with $N_1/N_2=5$ provides us with more information 
	about $\x_1$, relative to $\x_2$, as compared to that with $N_1/N_2=1$. 
	\label{fig:projection}}
\end{figure}

Fig.~\ref{fig:projection} demonstrates how the $\tilde\nabla^{(m)}$-projection is realized on the $\x_1\x_2$-plane of the Stokes parameter space: 
the left and right panels correspond to the cases when 
$N_1:N_2=1:1$ and $N_1:N_2=5:1$, respectively.
The change of $\x_1$-coordinate relative to the change of $\x_2$-coordinate along each trajectory is less noticeable in the right panel than in the left panel. 
This is because a tomography with $N_1/N_2=5$ provides us with more information about $\x_1$-coordinate, relative to $\x_2$-coordinate, as compared to that with $N_1/N_2=1$.

\section{Numerical demonstration}\label{sec:4}

In this section, we devise a method of computing the MLE based on Theorem~\ref{thm:main2}. 
Suppose we are given a temporal estimate
\[
 \hat\x=(\hat\x_1, \hat\x_2,\hat\x_3)
 :=\left(\frac{n_{1}^{+}-n_{1}^{-}}{N_1},\;\frac{n_{2}^{+}-n_{2}^{-}}{N_2},\;\frac{n_{3}^{+}-n_{3}^{-}}{N_3}\right).
\]
If $\|\hat\x \|\le 1$, then $\hat\x$ already gives a valid estimate (in fact the MLE) for $\x$. 
Otherwise, the estimate is corrected using the method stated in Theorem \ref{thm:main2}: 
the corrected estimate $\x^*=(\x^*_1, \x^*_2, \x^*_3)$ is the unique solution of the simultaneous equations
\begin{equation}\label{eqn:cubic}
\x^*_i \left(1-(\x^*_i)^2\right)=\l \hat s_i\,(\hat\x_i-\x^*_i),\qquad (i\in\{1,2,3\})
\end{equation}
and
\begin{equation}\label{eqn:quadratic}
  (\x^*_1)^2+(\x^*_2)^2+(\x^*_3)^2=1, 
\end{equation}
with $\l>0$. 

Let us consider, for each $a\in (-1,1)$, the following cubic equation in $x$:
\[
 x(1-x^2)=\m (a-x).
\]
This equation has a unique solution 
\begin{equation}\label{eqn:sol}
 x=({\rm sgn}\,a) \frac{2\sqrt{\m+1}}{\sqrt{3}}\cos\left[
 	\frac{1}{3}\left(\pi+
	\arctan\sqrt{\frac{4(\m+1)^{3}}{27\m^{2}a^{2}}-1}
	\right)\right]
\end{equation}
in the interval $-1< x <1$. 
Let us denote the right-hand side of (\ref{eqn:sol}) as $x(\m,a)$. 
Then the solution of each equation in (\ref{eqn:cubic}) is given by $\x^*_i=x(\l \hat s_i, \hat\x_i)$, and the norm condition (\ref{eqn:quadratic}) is reduced to
\begin{equation}\label{eqn:quadratic2}
 x(\l \hat s_1, \hat\x_1)^2+x(\l \hat s_2, \hat\x_2)^2+x(\l \hat s_3, \hat\x_3)^2=1. 
\end{equation}
This is an equation for a single variable $\l$. 
Let $\l^*$ be the unique positive solution of (\ref{eqn:quadratic2}). 
Then the MLE is given by
\[
 \x^*_i=x(\l^* \hat s_i, \hat\x_i),\qquad (i\in\{1,2,3\}).
\]

In practice, the solution $\l^*$ cannot be obtained explicitly: 
thus, we must invoke numerical evaluation.
For the sake of demonstration,
we computed the MLE 1000 times on MATHEMATICA software version 10.4, 
using (i) FindRoot function to solve  (\ref{eqn:quadratic2}),
and (ii) FindMaximun function to find the maximizer (\ref{eqn:naiveMLE}) directly,
under the condition that $N_1=N_2=N_3$, 
starting from randomly generated initial points $(\hat\x_1,\hat\x_2,\hat\x_3)$ that fall outside the Bloch ball. 
The average computation time was 2.20313 [msec] for (i), and 21.6406 [msec] for (ii). 
As far as this demonstration is concerned, our method works very efficiently. 

We note that the present method has been successfully applied to 
an experimental study using photonic qubits \cite{OkamotoOYFT:2016}.

\section{Conclusions}\label{sec:5}

In the present paper, a statistically feasible method of data post-processing for the quantum state tomography was studied from an information geometrical point of view. 
Suppose that, among $N$ independent experiment, the $i$th Pauli matrix $\s_i$ was measured $N_i$ times and obtained outcomes $+1$ and $-1$, each $n^+_i$ and $n^-_i$ times. 
Then the space $(-1,1)^3$ of the Stokes parameter $\x=(\x_1,\x_2,\x_3)$ should be regarded as a Riemannian manifold endowed with a metric 
\[
g_\x\left(\frac{\partial}{\partial\x_i}, \frac{\partial}{\partial\x_j}\right)
  =\frac{\hat s_i\, \d_{ij}}{1-(\x_i)^2},
\]
where $\hat s_i:=N_i/N$.
Furthermore, if the temporal estimate
\[
 \hat\x=\left(\frac{n^+_1-n^-_1}{N_1},\frac{n^+_2-n^-_2}{N_2}, \frac{n^+_3-n^-_3}{N_3}\right) 
\]
for the parameter $\x$ has fallen outside the Bloch ball, then the maximum likelihood estimate (MLE) is the orthogonal projection from $\hat\x$ onto the Bloch sphere with respect to the metric $g$ defined above. 
An efficient algorithm for finding the MLE was also proposed. 


\section*{Acknowledgment}

The authors are grateful to Professors Ryo Okamoto and Shigeki Takeuchi for helpful discussions. 
The present study was supported by JSPS KAKENHI Grant Number JP22340019.

\appendix
\section*{Appendix: Information geometry: an overview} 

In this appendix, we give a brief summary of information geometry. 
Suppose we are given a Riemannian manifold $(M,g)$, where $M$ is an $n$-dimensional differentiable manifold and $g$ is a metric. 
A pair of affine connections, $\nabla$ and $\nabla^{*}$, on $(M,g)$ 
are said to be {\em mutually dual} with respect to $g$ if they satisfy
\begin{equation}\label{eqn:duality}
Xg(Y,Z)=g(\nabla_{X}Y,Z)+g(Y,\nabla_{X}^{*}Z)
\end{equation}
for vector fields $X,Y$, and $Z$ on $M$. 
A triad $(g,\nabla,\nabla^*)$ satisfying the duality (\ref{eqn:duality}) is called a {\em dualistic structure} on $M$. 
If Riemannian curvatures and torsions of $\nabla$ and $\nabla^{*}$ all vanish, then $M$ is said to be {\em dually flat}. 

For a dually flat manifold $(M,g,\nabla, \nabla^{*})$, we can construct a pair of affine coordinate systems in the following way. 
Since $M$ is $\nabla$-flat, there is a $\nabla$-affine coordinate system $\th=(\th^i)_{1\le i\le n}$. 
Likewise, since $M$ is $\nabla^*$-flat, there is a $\nabla^*$-affine coordinate system $\y=(\y_i)_{1\le i\le n}$. 
Furthermore, we can choose $\th$ and $\y$ in such a way that they satisfy the orthogonality:
\[ 
g\left(\frac{\partial}{\partial\theta^{i}},\frac{\partial}{\partial\eta_{j}}\right)=\delta_{i}^{j}.
\] 
Such a pair of $\nabla$- and $\nabla^*$-affine coordinate systems $\{\th, \y\}$ is said to be {\em mutually dual} with respect to the dualistic structure $(g,\nabla,\nabla^*)$. 
 
By using dual affine coordinate systems $\{\th, \y\}$, we can construct a pair of canonical divergences on a dually flat manifold $(M,g,\nabla, \nabla^{*})$ as follows. 
We first find a pair of potential functions $\psi(\th)$ and $\varphi(\y)$ on $M$ that satisfy
\[
 \th^i=\partial^i\varphi(\y),\qquad \y_i=\partial_i\psi(\th),\qquad
 \psi(\th)+\varphi(\y)-\sum_{i}\th^i\y_i=0,
\]
where $\partial^i:=\partial/\partial\eta_i$ and $\partial_i:=\partial/\partial\th^i$. 
By using these potentials, we define the {\em $\nabla$-divergence} $D^\nabla$ from $p\in M$ to $q\in M$ as
\[
 D^\nabla(p\|q):=\psi(\th(p))+\varphi(\y(q))-\sum_{i} \th^i(p)\y_i(q),
\]
where $\th(p)=(\th^i(p))_{1\le i\le n}$ and $\y(q)=(\y^i(q))_{1\le i\le n}$ are the $\th$-coordinate of $p$ and $\y$-coordinate of $q$, respectively. 
The other divergence $D^{\nabla^*}$, called the {\em $\nabla^*$-divergence}, is defined by changing the role of $\nabla$ and $\nabla^*$, to obtain
\[
 D^{\nabla^*}(p\| q):=D^{\nabla}(q\| p).
\]
It is shown that $D^{\nabla}(p \| q)\ge 0$ for all $p,q\in M$, and $D^{\nabla}(p \| q)=0$ if and only if $p=q$.

Incidentally, we note that the components of the metric $g$ with respect to the coordinate systems $\th$ and $\eta$ are given, respectively, by
\[
 g_{ij}:=g\left(\frac{\partial}{\partial\th^{i}},\frac{\partial}{\partial\th^{j}}\right)=\partial_i\partial_j\psi(\th),
\]
and
\[
 g^{ij}:=g\left(\frac{\partial}{\partial\y_{i}},\frac{\partial}{\partial\y_{j}}\right)=\partial^i\partial^j\varphi(\y). 
\]
The notations $g_{ij}$ and $g^{ij}$ fulfill the convention in tensor analysis, in that the inverse of the matrix $[g_{ij}]_{1\le i,j\le n}$ is actually identical to $[g^{ij}]_{1\le i,j\le n}$.

Now, let $M$ be a generic differentiable manifold endowed with an affine connection $\nabla$.
A submanifold $S$ of $M$ is called {\em $\nabla$-autoparallel} if $(\nabla_XY)_p\in T_pS$ for all vector fields $X$ and $Y$ on $S$, and all $p\in S$. 
In particular, a one-dimensional $\nabla$-autoparallel submanifold is called a {\em $\nabla$-geodesic}. 

Returning to a dually flat manifold $(M,g,\nabla, \nabla^{*})$, 
the $\nabla$-geodesic connecting two points $p$ and $q$ on $M$ is represented in terms of the $\nabla$-affine coordinate system $\theta$ as 
\[
\left\{ \left.\theta(p)+t(\theta(q)-\theta(p))\right|0\leq t\leq1\right\}.
\]
Similarly, the $\nabla^{*}$-geodesic connecting $p,q \in M$ is represented in terms of the $\nabla^*$-affine coordinate system $\y$ as 
\[
 \left\{ \left.\eta(p)+t(\eta(q)-\eta(p))\right|0\leq t\leq1\right\}.
\]
For three points $p,q$, and $r$ in $M$, we have
\[
D^{\nabla}(p \| q)+D^{\nabla}(q \| r)-D^{\nabla}(p \| r)=\sum_i(\th^i(p)-\th^i(q))(\y_i(r)-\y_i(q)).
\] 
It follows from this identity that, if $\nabla$-geodesic connecting $p,q$ and $\nabla^{*}$-geodesic connecting $q,r$
are orthogonal at $q$ with respect to the metric $g$, then the following {\em generalized Pythagorean theorem} holds (cf., Fig.~\ref{fig:Pythagoras}). 
\begin{equation}\label{eqn:Pythagoras}
D^{\nabla}(p \| q)+D^{\nabla}(q \| r)=D^{\nabla}(p \| r).
\end{equation}

\begin{figure}[t] 
	\begin{centering}
	\includegraphics[scale=0.5]{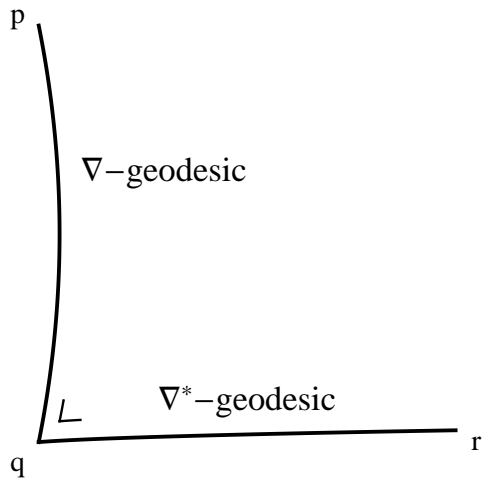}
	\par
	\end{centering}
	\caption{If $\nabla$-geodesic connecting $p,q$ and $\nabla^{*}$-geodesic connecting $q,r$
		are orthogonal at $q$ with respect to $g$, the generalized Pythagorean theorem 
		$D^{\nabla}(p \| q)+D^{\nabla}(q \| r)=D^{\nabla}(p \| r)$ holds. 
	\label{fig:Pythagoras}}
\end{figure}

Given a (closed) submanifold $S$ of $M$ and a point $p\in M\backslash S$, let $q^*\in S$ be the point on $S$
that is ``closest'' from $p$ as measured by the $\nabla$-divergence $D^{\nabla}$, i.e., 
\[
q^*:=\argmin_{r\in\S}D^{\nabla}(p \| r).
\]
Then, due to the generalized Pythagorean theorem (\ref{eqn:Pythagoras}), the point $q^*$ is 
the $\nabla$-projection from $p$ to $S$ or its boundary.

A typical example of a dually flat manifold appears in statistics. 
The totality $\P(\O)$ of probability distributions on a finite sample space $\O$
is a $(|\O|-1)$-dimensional dually flat manifold with respect to the dualistic structure $(g,\nabla^{(e)},\nabla^{(m)})$,
where $g$ is the {\em Fisher metric}: 
\[
 g_p(X,Y):=\sum_{\o\in\O} p(\o)\left(X\log p(\o)\right)\left(Y\log p(\o)\right),
\]
$\nabla^{(e)}$ is the {\em exponential connection}:
\[
 g_p(\nabla^{(e)}_XY,Z):=\sum_{\o\in\O} \left(XY\log p(\o)\right)\left(Z p(\o)\right),
\]
and $\nabla^{(m)}$ is the {\em mixture connection}:
 \[
 g_p(\nabla^{(m)}_XY,Z):=\sum_{\o\in\O} \left(XY p(\o)\right)\left(Z \log p(\o)\right).
\]

Observe that each point $p\in\P(\O)$ is represented in the form
\[
 p(\o)=p_\y(\o):=\sum_{i=1}^{|\O|-1}\y_i\d_i(\o)+\left(1-\sum_{i=1}^{|\O|-1}\y_i\right)\d_n(\o),
 \qquad (\o\in\O),
\]
where $\d_i(\o)$ is the $\d$-measure concentrated on the $i$th outcome $\o_i$. 
Thus, the parameters $\y=(\y^i)_{1\le i\le |\O|-1}$ form a $\nabla^{(m)}$-affine coordinate system. 
The dual $\nabla^{(e)}$-affine coordinate system $\th=(\th^i)_{1\le i\le |\O|-1}$ is given by
\[
 \th^i=\log\frac{p(i)}{p(n)}.
\]
The potential functions $\psi(\th)$ and $\varphi(\y)$ are 
\[
 \psi(\th)=\log\left(1+\sum_{i=1}^{|\O|-1} e^{\th^i}\right). 
\]
and
\[
 \varphi(\y)=\sum_{i=1}^{|\O|-1} \y_i\log \y_i
 	+\left(1-\sum_{i=1}^{|\O|-1}\y_i\right)\log\left(1-\sum_{i=1}^{|\O|-1}\y_i\right).
\]
Note that $\varphi(\y)$ is the negative entropy of $p_\y$. 
By using these potential functions, a pair of divergence functions 
are defined. 
In particular, the $\nabla^{(m)}$-divergence $D^{\nabla^{(m)}}(p\| q)$ turns out to be identical to the {\em Kullback-Leibler divergence}
\[
 D(p\|q)=\sum_{\o\in\O} p(\o)\log\frac{p(\o)}{q(\o)}.
\]

A family $\{p_{\overline\th}(\o)\}_{\overline\th}$ of probability distributions parametrized by $\overline\th=(\th^1,\dots,\th^k)$ is called a $k$-dimensional {\em exponential family} if it takes the form
\[
p_{\overline\th}(\o)=\exp\left[ C(\o)+ \sum_{i=1}^k \th^i F_i(\o)-\psi(\overline\th) \right],
\]
and a family $\{p_{\overline\y}(\o)\}_{\overline\y}$ of probability distributions parametrized by $\overline\y=(\y_1,\dots,\y_k)$ is called a $k$-dimensional {\em mixture family} if it takes the form
\[
p_{\overline\y}(\o)=\sum_{i=1}^k \y_i\, p_i(\o)+\left(1-\sum_{i=1}^k \y_i\right) p_0(\o).
\]
It is shown that a submanifold $S$ of $\P(\O)$ is $\nabla^{(e)}$-autoparallel if and only if it is an exponential family, 
and that $S$ is $\nabla^{(m)}$-autoparallel if and only if it is a mixture family. 
For more information, consult \cite{{AmariNagaoka},{AmariLN},{MurrayRice}}. 


\end{document}